\numberwithin{equation}{section}
\newtheorem{assumption}{Assumption}[section]
\newtheorem{definition}{Definition}
\newtheorem{theorem}{Theorem}[section]
\newtheorem{lemma}{Lemma}[section]
\newtheorem{proposition}{Proposition}
\newcommand{\ba}{\begin{array}}
\newcommand{\ea}{\end{array}}
\newcommand{\bs}{\begin{align}\begin{split}\nonumber}
\newcommand{\bsnumber}{\begin{align}\begin{split}}
\newcommand{\es}{\end{split}\end{align}}
\renewcommand{\[}{\left[}
\newcommand{\bi}{\begin{itemize}}
\newcommand{\ei}{\end{itemize}}
\newcommand{\be}{\begin{enumerate}}
\newcommand{\ee}{\end{enumerate}}
\begin{document}
%\lhead{}
%\rhead{\thepage}
%\cfoot{}
%\cfoot{\fancyplain{}}

\title{Estimating Semi-parametric Panel Multinomial Choice\\ Models using Cyclic Monotonicity\thanks
% in Cross-sectional and Panel Settings\thanks
{Emails: xshi@ssc.wisc.edu, mshum@caltech.edu, wsong22@wisc.edu.
We thank Khai Chiong, Federico Echenique, Bruce E. Hansen,  Jack R. Porter, and seminar audiences at
Johns Hopkins, Northwestern, NYU, UC Riverside, UNC, and the Xiamen/WISE Econometrics Conference in Honor of Takeshi Amemiya for useful comments.  Pengfei Sui and Jun Zhang provided excellent research assistance. Xiaoxia Shi acknowledges the financial support of the Wisconsin Alumni Research Foundation via the Graduate School Fall Competition Grant.
}}

\author{Xiaoxia Shi\\University of Wisconsin-Madison
\and
Matthew Shum\\Caltech
\and
Wei Song\\University of Wisconsin-Madison}
\maketitle
\begin{abstract}
This paper proposes a new semi-parametric identification and estimation approach to multinomial choice models in a panel data setting with individual fixed effects. Our approach is based on {\em cyclic monotonicity}, which is a defining feature of
the random utility framework underlying multinomial choice models. From the cyclic monotonicity property, we derive identifying inequalities  without requiring any shape restrictions for the distribution of the random utility shocks.  These inequalities point identify model parameters under straightforward assumptions on the covariates.   We propose a consistent estimator based on these inequalities. % and apply  it to a panel data set to study the determinants of  the demand of bathroom tissue.

\medskip{}

Keywords: Cyclic Monotonicity, Multinomial Choice, Panel Data, Fixed Effects. 
\end{abstract}

\section{Introduction}
Consider a panel multinomial choice problem where agent $i$ chooses from $K+1$ options
(labelled $k=0,\ldots,K$). Choosing option $k$ in period $t$ gives the agent indirect utility
\begin{equation}\label{eq:panelspec}
A_i^k+\beta'X_{it}^k + \epsilon_{it}^k,
\end{equation}
where $X_{it}^k$ is a $d_x$-dimensional vector of observable covariates that has support ${\cal X}$, $\beta$ is the vector of weights for the covariates in the agent's utility, $\vec{A_i}= (A^0_i,\ldots,A^K_i)'$ are agent-specific fixed effects, and $\epsilon_{it}^k$ are unobservable utility shocks the distribution of which is not specified. 
The agent chooses the option that gives her the highest utility: 
\begin{equation}\label{eq:panelchoice}
Y_{it}^k = 1\{\beta'X_{it}^k + A_{i}^k +\epsilon_{it}^k\geq \beta'X_{it}^{k'}+A_{i}^{k'}+\epsilon_{it}^{k'};\ \forall k'\},
\end{equation}
where $Y_{it}^k$ denotes the multinomial choice indicator. Let the panel data have the standard structure, that is, identically and independently distributed (i.i.d.) across $i$ and stationary across $t$. As is standard, normalize $\|\beta\| = 1$, $X_{it}^0 = \mathbf{0}_{d_x}$ and $A_i^0 = 0=\epsilon_{it}^0$. We will not impose location normalization for $\epsilon_{it}^k$ or $A_i^k$, and as a result, it is without loss of generality to assume that $X_{it}^k$ does not contain a constant. 

In this paper, we propose a new semi-parametric approach to the identification and estimation of $\beta$.  We exploit the notion of {\em cyclic monotonicity}, which is an appropriate generalization of ``monotonicity'' to multivariate (i.e. vector-valued) functions.
We first show that the cyclic monotonicity property applies to the vector of choice probabilities $\left\{ P(Y^k=1|X^0,\dots, X^K)\right\}_{k=0,1,\ldots,K}$ emerging from any multinomial choice model (including both panel as well as simpler cross-sectional models), when viewed as a function of the vector of linear utility indices $(\beta'X^0,\dots,\beta'X^K)'$. 

Applied to panel models, the cyclic monotonicity property implies a collection of moment inequalities in which the fixed effects are differenced out. We then give a necessary and sufficient condition for the point identification of $\beta$ based on these inequalities.  Two sets of sufficient primitive conditions are subsequently discussed. Notably, one of the two sets  of primitive conditions allows all regressors to be bounded. We finally propose a consistent estimator for $\beta$, the computation of which requires only convex optimization.

To our knowledge, this is the first paper that deals with the incidental parameter problem while achieving point identification for semi-parametric panel multinomial choice models.   For partial identification in these models, Pakes and Porter (2015) propose an alternative approach. Pakes and Porter  construct inequality restrictions that partially identify $\beta$ using an idea that can be viewed as generalizing  Manski's (1987) maximum score estimator for panel binary choice models to the panel multinomial setting. By comparison, this paper can be seen as generalizing Han's (1987, 1988) maximum rank-correlation estimator, which applies to cross-sectional binary choice mdoels, to the panel multinomial setting.\footnote{Abrevaya (1999) proposes a maximum rank-correlation estimator for panel transformation models. His approach does not apply to discrete choice models due to a strict monotonicity requirement on the transformation function.}
 
The literature on semi-parametric panel binary choice models is large.  Manski (1987) and Honor\'{e} and Kyriazidou (2000) use the maximum score approach, while Honor\'{e} and Lewbel (2002) generalize the special regressor approach of Lewbel (1998, 2000) to the panel data setting.  Identification conditions in these three papers are non-nested with ours. Chamberlain (2010) shows the impossibility of point identification in a binary choice special case of the model described by Eqs. (\ref{eq:panelspec}) and (\ref{eq:panelchoice}) when $X_{it}$ is bounded and contains a time dummy. This impossibility result is implied by our necessity result (Theorem B.1) which shows that uniform point identification is impossible if all regressors are bounded and at least one of them is finite-valued (e.g. the time dummy). When no regressor is finite-valued, boundedness does not preclude point identification, as shown in one of our sufficiency results (Theorem 3.2).
 
Semi-parametric identification and estimation of multinomial choice models have been considered in cross-sectional settings (i.e., models without individual fixed effect).  Manski (1975) and Fox (2007) base identification on the assumption of a {\em rank-order property} that the ranking of $\beta'X_i^k$ across $k$ is the same as that of  $E[Y_i^k|X_i]$ across $k$; this is an IIA-like property that allows utility comparisons among all the options in the choice set to be decomposed into pairwise comparisons among these options. To ensure this rank-order property, Manski assumes that the error terms are i.i.d. across $k$, while Fox relaxes the i.i.d. assumption to exchangeability. Exchangeability (or the rank-order property) is not used in our approach. In addition, Powell and Ruud (2008) and Ahn, Ichimura, Powell, and Ruud (2015) consider an alternative approach based on matching individuals with equal conditional choice probabilities, which requires that the rank of a certain matrix formed from the data to be deficient by exactly 1.  This approach does not obviously extend to the panel data setting with fixed effects.
% Primitive conditions are not discussed for this assumption and it is not clear how to apply their idea to the panel data fixed effect setting. 
 
 The existing literatures on cross-sectional binary choice models  and on the semi-parametric estimation of single or multiple index models (which include discrete choice  models as examples) is voluminous and less relevant for us, and thus is not reviewed here for brevity.\footnote{
An exhaustive survey is provided in Horowitz (2009), chapters 2 and 3.}

The paper proceeds as follows.
In section 2, we  introduce the notion of cyclic monotonicity and relate it to panel multinomial choice models with fixed effects.   Subsequently, in Section 3, we present the moment inequalities emerging from cyclic monotonicity, and give assumptions under which these inequalities suffice to point identify the parameters of interest.  This section also contains some numerical illustrations.
  Section 4 presents an estimator, shows its consistency, and evaluates its performance using Monte Carlo experiments. In Section 5, we discuss the closely related aggregate panel multinomial choice model, which is a workhorse model for demand modelling in empirical IO.   This section also contains an illustrative empirical application using aggregate supermarket scanner data.  Section 6 concludes.

%Other related papers include the semi-parametric approaches for binary choice models of  Ichimura (1993), Klein and Spady (1993) and Han (1987), and those for multinomial choice models of Ichimura and Lee (1991) and Lee
%(1995)).  The on-going project of Pakes and Porter (2013) provides a different semi-parametric approach for multinomial choice models for panel data where the fixed effects are also differenced out. 

\section{Cyclic Monotonicity and Multinomial Choice Models}
We begin by defining cyclic monotonicity, the central notion of this paper.

\begin{definition}[Cyclic Monotonicity] Consider a function $f:{\cal U}\to R^K$ where ${\cal U}\subseteq R^K$, and a length $M$-cycle of points in $R^K$: $u_1,u_2,\dots,u_M, u_1$. The function $f$ is cyclic monotone  with respect to the cycle $u_1,u_2,\dots,u_M, u_1$ if and only if
\begin{equation}
\sum_{m=1}^M (u_{m}-u_{m+1})'f(u_m) \geq 0,
\end{equation}
where $u_{M+1}=u_1$. The function $f$ is cyclic monotone on ${\cal U}$ if it is cyclic monotone with respect to all possible cycles of all lengths on its domain.\footnote{Technically, this defines  the property of being ``cyclic monotonically increasing,'' but for notational simplicity and without loss of generality, we use ``cyclic monotone'' for ``cyclic monotonically increasing.''} 
\end{definition}

For real-valued functions defined on a real-space (i.e., $K=1$), cyclic monotonicity is equivalent to monotonicity.  In this sense, cyclic monotonicity generalizes monotonicity in a vector-valued context.
We make use of the following basic result which relates cyclic monotonicity to convex functions:
\begin{proposition}[Cyclic monotonicity and Convexity]\label{prop:Convexity} Consider a differentiable function $F:{\cal U}\to R$ for an open convex set\ ${\cal U}\subseteq R^K$. If $F$ is convex on ${\cal U}$, then the gradient of $F$ (denoted $\nabla F(u):=\partial F(u)/\partial u$) is cyclic monotone on ${\cal U}$. 
\end{proposition}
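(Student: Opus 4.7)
The plan is to derive the cyclic monotonicity inequality directly from the first-order characterization of differentiable convex functions, namely that for all $u,v \in \mathcal{U}$,
\begin{equation*}
F(v) \geq F(u) + \nabla F(u)'(v-u).
\end{equation*}
This inequality is the standard supporting-hyperplane characterization of convexity for differentiable functions on an open convex set, and I would simply invoke it (a textbook fact).

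Next, I would apply this inequality along each consecutive pair in the cycle $u_1, u_2, \ldots, u_M, u_{M+1}=u_1$. Taking $u = u_m$ and $v = u_{m+1}$ and rearranging gives
\begin{equation*}
(u_m - u_{m+1})' \nabla F(u_m) \geq F(u_m) - F(u_{m+1}).
\end{equation*}
Summing this inequality over $m=1,\ldots, M$ yields
\begin{equation*}
\sum_{m=1}^M (u_m - u_{m+1})' \nabla F(u_m) \;\geq\; \sum_{m=1}^M \bigl(F(u_m) - F(u_{m+1})\bigr).
\end{equation*}
Because the cycle closes at $u_{M+1}=u_1$, the right-hand side telescopes to zero, which is exactly the cyclic monotonicity condition in the Definition applied to $f = \nabla F$. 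Since the cycle was arbitrary in both length and location (as long as the points lie in $\mathcal{U}$), we conclude $\nabla F$ is cyclic monotone on $\mathcal{U}$.

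There is essentially no obstacle here: the only nontrivial input is the supporting-hyperplane inequality for convex differentiable functions, and the rest is the telescoping trick. The main thing to be careful about is ensuring that convexity is used in its global form (which is available on the open convex domain $\mathcal{U}$) so that the inequality applies to each pair $(u_m, u_{m+1})$ in the cycle. No additional regularity (e.g., twice differentiability) is needed, and the result requires no specialization beyond what is stated.
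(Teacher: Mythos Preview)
Your proof is correct and is exactly the standard argument; the paper itself does not supply a proof but simply cites Rockafellar (1970, Ch.~24) and Villani (2003, Sct.~2.3), where this same supporting-hyperplane-plus-telescoping derivation appears.
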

The proof for Proposition \ref{prop:Convexity} is available from standard sources  (e.g, Rockafellar (1970, Ch. 24), Villani (2003, Sct. 2.3)).
Consider a univariate and differentiable convex function; obviously, its slope must be monotonically nondecreasing.   The above result states that cyclic monotonicity is the appropriate extension of this feature to multivariate convex functions.

\medskip

Now we connect the above discussion to the multinomial choice model. We start with a generic random utility model for multinomial choices without specifying the random utility function or the data structure in detail. Suppose that an agent is choosing from $K+1$ choices $0, 1, \dots, K$. The utility that she derives from choice $k$ is partitioned into two additive parts: $U^k+\epsilon^k$, where $U^k$ denotes the systematic component of the latent utility, while $\epsilon^k$ denotes the random shocks, idiosyncratic across agents and choice occasions. She chooses choice $k^\ast$ if $U^{k^\ast} +\epsilon^{k^\ast}\geq \max_{k=0,\dots,K}U^k+\epsilon^k$. Let $Y^k = 1$ if she chooses choice $k$ and $0$ otherwise. As is standard, we normalize $U^0 = \epsilon^0 =0$. 

Let $u^k$ denote a generic realization of $U^k$. Also let $\vec{U} = (U^1,\dots,U^K)'$, $\vec{u}=(u^1,\dots,u^K)'$, and $\vec{\epsilon} = (\epsilon^1,\dots,\epsilon^K)'$. Then we can define a function that is a stepping stone for applying cyclic monotonicity in the multinomial choice context. The function, which McFadden (1978, 1981) called the ``social surplus function,'' is the expected utility obtained from the choice problem:
\begin{equation}
{\cal G}(\vec{u}) = E\left\{\max_{k=0,\dots,K}[U^k+\epsilon^k]|\vec{U}=\vec{u}\right\}.
\end{equation}
The following lemma shows that this function is convex,  that the gradient of it is the choice probability function, and finally that the choice probability function is cyclic mononotone. 

\begin{lemma}[Gradient]\label{lem:gradient} Suppose that $\vec{U}$ is independent of $\vec{\epsilon}$ and that the cumulative distribution function (c.d.f.) of $\vec{\epsilon}$ is continuous everywhere. Then 

\textup{(a)} ${\cal G}(\cdot)$ is convex on $R^{K}$,

\textup{(b)} ${\cal G}(\cdot)$ is differentiable on $R^{K}$,

\textup{(c)} $\vec{p}(\vec{u}) = \nabla{\cal G}(\vec{u})$, where $\vec{p}(\vec{u}) = E[\vec{Y}|\vec{U} = \vec{u}]$ and $\vec{Y} = (Y^1,\dots, Y^K)'$, and

\textup{(d)} $\vec{p}(\vec{u}) $ is cyclic monotone on $R^{K}$.
\end{lemma}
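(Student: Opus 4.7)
The plan is to exploit independence of $\vec{U}$ and $\vec\epsilon$ to rewrite
\begin{equation*}
\mathcal{G}(\vec{u}) = E\bigl[\max_{k=0,\ldots,K}(u^k+\epsilon^k)\bigr]
\end{equation*}
(with the normalizations $u^0=\epsilon^0=0$) as an \emph{unconditional} expectation in $\vec\epsilon$, and then read (a)--(d) off standard properties of a maximum of affine functions. For (a), I observe that for each realization of $\vec\epsilon$ the integrand is a pointwise maximum of a finite family of affine functions of $\vec{u}$, hence convex in $\vec{u}$; expectation preserves convexity, giving (a).

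For parts (b) and (c), which I would handle jointly, the key technical fact is that $\max_k(u^k+\epsilon^k)$ is $1$-Lipschitz in each coordinate $u^j$ uniformly in $\vec\epsilon$: increasing $u^j$ by $h>0$ moves the maximum by a number in $[0,h]$. Hence the coordinate difference quotient inside $E[\cdot]$ is bounded in absolute value by $1$, so dominated convergence is available. Next, because the c.d.f.\ of $\vec\epsilon$ is everywhere continuous, $P(u^k+\epsilon^k = u^{k'}+\epsilon^{k'})=0$ for every pair $k\neq k'$ and every $\vec{u}$, so with probability one the argmax $k^\ast(\vec\epsilon,\vec{u})$ is unique. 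On this event the integrand is differentiable at $\vec{u}$ in the $u^j$ direction with derivative $1\{k^\ast=j\}$. Passing $h\to 0$ inside the expectation yields
\begin{equation*}
\frac{\partial \mathcal{G}(\vec{u})}{\partial u^j} = P(k^\ast = j \mid \vec{U}=\vec{u}) = P(Y^j=1\mid \vec{U}=\vec{u}) = p^j(\vec{u}),\qquad j=1,\ldots,K.
\end{equation*}
I would then invoke the standard convex-analysis fact that a convex function with all partial derivatives at a point is differentiable there (Rockafellar (1970), Theorem 25.2) to upgrade the existence of partials to full differentiability, obtaining (b); combining with the display above gives (c).

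Part (d) is then immediate: $\mathcal{G}$ is convex and differentiable on the open convex set $R^K$ by (a) and (b), so Proposition \ref{prop:Convexity} implies that $\nabla\mathcal{G}$ is cyclic monotone on $R^K$, and by (c) this gradient equals $\vec{p}$. The most delicate step is the interchange of derivative and expectation used in (c); however, the uniform $1$-Lipschitz bound on coordinate differences of the integrand makes the dominated-convergence application essentially automatic, so the substantive content really lies in the continuity-of-c.d.f.\ hypothesis, which is precisely what rules out almost-sure ties among $\{u^k+\epsilon^k\}$ and hence nontrivial subdifferentials of the integrand.
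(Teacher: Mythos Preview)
Your proposal is correct and follows essentially the same route as the paper: independence reduces $\mathcal{G}$ to an unconditional expectation of a pointwise maximum, convexity is inherited from the integrand, the $1$-Lipschitz bound on coordinate increments justifies bounded/dominated convergence, and continuity of the c.d.f.\ kills ties so the partial derivative equals the choice probability. Your explicit appeal to Rockafellar's Theorem~25.2 to upgrade existence of partials to full differentiability is a refinement the paper leaves implicit, but otherwise the arguments coincide.
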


The cyclic monotonicity of the choice probability can be used to identify the structural parameters in $\vec{U}$ in a variety of  settings. In this paper, we focus on the linear panel data model with fixed effects, composed of equations (\ref{eq:panelspec}) and (\ref{eq:panelchoice}). 

\section{Panel Data Multinomial Choice Models with Fixed Effects}

Consider a short panel data setting where there are $T$ time periods. Let $\vec{U}$, $\vec{\epsilon}$, and $\vec{Y}$ be indexed by both $i$ (individual) and $t$ (time period). Thus they are now $\vec{U}_{it} \equiv (U_{it}^1, \dots,U_{it}^K)'$, $\vec{\epsilon}_{it} \equiv (\epsilon_{it}^1, \dots,\epsilon_{it}^K)'$, and $\vec{Y}_{it} \equiv (Y_{it}^1, \dots,Y_{it}^K)'$. Let there be an observable $d_x$ dimensional covariate $X^k_{it}$ for each choice $k$ and let ${U}^k_{it}$ be a linear index of $X^k_{it} $ plus an unobservable individual effect $A^k_i$:
\begin{align}
U^k_{it} = \beta'X^k_{it}+A^k_{i},
\end{align}
where $\beta$ is a $d_x$-dimensional unknown parameter. Let $\vec{X}_{it} = (X_{it}^1,\dots,X_{it}^K)$ and $\vec{A}_i = (A^1_i, \dots, A^K_i)'$. Note that $\vec{X}_{it}$ is a $d_x\times K$ matrix. In short panels, the challenge in this model is the identification of $\beta$ while allowing correlation between the covariates and the individual effects. We tackle this problem using the cyclic monotonicity of the choice probability, as we explain next.

 \subsection{Identifying Inequalities}
We derive our identification inequalities under the following assumption.

\begin{assumption}\label{ass:error}
\textup{(a)} The error term $\vec{\epsilon}_{it}$ is independent of $\vec{X}_{it}$ given $\vec{A}_i$, 

\textup{(b)} $E[\vec{Y}_{it}|\vec{X}_{i1},\dots, \vec{X}_{iT}, \vec{A}_i] $ $= E[\vec{Y}_{it}|\vec{X}_{it},\vec{A}_i]$ for all $t$, and 

\textup{(c)} the conditional c.d.f. of $\vec{\epsilon}_{it}$ given $\vec{A}_i$ is continuous everywhere.
\end{assumption}
\noindent \textbf{Remark.} As we see in the derivation below, Assumption \ref{ass:error}(a) ensures cyclic monotonicity, while Assumption \ref{ass:error}(b) allows us to integrate out the unobservable individual effect. In terms of dependence between the covariates and the errors, these conditions require only {\em conditional} independence given $\vec{A}_i$.   (They accommodate, for example, heteroskedasticity that depends on $\vec{A}_i$.) 
Furthermore, they can also be easily weakened when there are reasonable control variables, even when those control variables are unobservable and are of  infinite dimension.  We discuss these extensions in Appendix \ref{sec:control}.

Moreover, in terms of dependence amongst the errors, i.i.d. errors (whether across time periods or across choices) are not required. The errors across choices within one time period can have arbitrary joint distribution; the errors across time periods can be arbitrarily dependent.\hfill$\blacksquare$

Under Assumptions \ref{ass:error}(a) and (c), Lemma \ref{lem:gradient} implies that  the conditional choice probability
 \begin{align}
\vec{p}(\vec{v},\vec{a}) = E[\vec{Y}_{it}|\vec{X}_{it}'\beta= \vec{v}, \vec{A}_i = \vec{a}] \label{eq:pva}\end{align} 
is cyclic monotone in $\vec{v}$ for any fixed $\vec{a}$.  Let $\vec{\cal X}$ be the support of $\vec{X}_{it}$. The cyclic monotonicity of $\vec{p}(\vec{v},\vec{a})$ with respect to $\vec{v}$ for any fixed $\vec{a}$ can be written as
\begin{align}
\sum_{m=1}^M p(\beta'\vec{x}_m,\vec{a})'(\vec{x}_{m}'\beta - \vec{x}_{m+1}'\beta)\geq 0, \forall \vec{a}, ~\forall \vec{x}_1,\vec{x}_2,\dots,\vec{x}_M\in \vec{\cal X} \text{ and } \forall M, \label{cmpm}
\end{align}
where $\vec{x}_{M+1} = \vec{x}_1$. These inequalities cannot be used directly to identify $\beta$ because the conditional choice probability function $p(\vec{v}, \vec{a})$ is not identified due to the latency of $\vec{A}_i$. However, we show that $\vec{a}$ can be integrated out with panel data.

Now use Assumption \ref{ass:error}(b), and we have that, for any cycle $t_1,t_2,\dots,t_M, t_{M+1}=t_1$ in  $\{1,\dots, T\}$, and any $(\vec{x}_1,\dots,\vec{x}_M, \vec{a})$ in the support of $(\vec{X}_{it_1},\dots,\vec{X}_{i,t_M},\vec{A}_i)$, 
\begin{align}
p(\beta'\vec{x}_m,\vec{a})& = E[\vec{Y}_{it_m}|\vec{X}_{it_m}=\vec{x}_m,\vec{A}_i=\vec{a} ]\nonumber\\
& = E[\vec{Y}_{it_m}|\vec{X}_{it_1}=\vec{x}_1,\dots,\vec{X}_{it_M}=\vec{x}_M,\vec{A}_i=\vec{a}]~\forall m=1,\dots,M.  
\label{eq34}\end{align}
This and eq. (\ref{cmpm}) together implies that, for any positive integer $M\leq T$ and any cycle $t_1,t_2,\dots,t_M,$ $t_{M+1}=t_1$ in  $\{1,\dots, T\}$,
\begin{align}
\sum_{m=1}^ME[\vec{Y}_{it_m}'|\vec{X}_{it_1},\dots,\vec{X}_{it_M},\vec{A}_i](\vec{X}_{it_{m}}'\beta - \vec{X}_{it_{m+1}}'\beta)\geq 0 \text{ almost surely}. \label{preintegratem}
\end{align}
Take conditional expectation given $\vec{X}_{it_1}, \dots,\vec{X}_{it_M}$ of both sides, and we get, for any positive integer $M\leq T$ and any cycle $t_1,t_2,\dots,t_M,$ $t_{M+1}=t_1$ in  $\{1,\dots, T\}$,
\begin{align}
\sum_{m=1}^ME[\vec{Y}_{it_m}'|\vec{X}_{it_1},\dots,\vec{X}_{it_M}](\vec{X}_{it_{m}}'\beta - \vec{X}_{it_{m+1}}'\beta)\geq 0 \text{ almost surely}. \label{panelidm}
\end{align}
These inequality restrictions involve only identified/observed quantities and the unknown parameter $\beta$, and thus can be used to identify $\beta$.

We summarize the result of the derivation in a lemma below. The proof for the lemma has already been given above in the discussion around Eqs. (\ref{eq:pva})-(\ref{panelidm}).

\begin{lemma}\label{lem:idineq} Under Assumption \textup{\ref{ass:error}}, 
$$\sum_{m=1}^ME[\vec{Y}_{it_m}'|\vec{X}_{it_1},\dots,\vec{X}_{it_M}](\vec{X}_{it_{m}}'\beta - \vec{X}_{it_{m+1}}'\beta)\geq 0 \text{ almost surely}.$$
\end{lemma}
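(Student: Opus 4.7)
The statement is essentially a corollary of Lemma \ref{lem:gradient}(d) combined with the two conditioning assumptions in Assumption \ref{ass:error}, so the plan is to carry out three moves: (i) instantiate cyclic monotonicity at the individual-effect level, (ii) rewrite the per-period choice probability as a conditional expectation over the full cycle of covariates, and (iii) integrate out the fixed effect using iterated expectations.

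First, I would fix an arbitrary realization $\vec{a}$ of $\vec{A}_i$ and view the multinomial choice model in period $t$ as a generic random utility model with systematic utility $\vec{U}_{it} = \vec{X}_{it}'\beta + \vec{a}$ and shock $\vec{\epsilon}_{it}$. By Assumption \ref{ass:error}(a), $\vec{\epsilon}_{it}$ is independent of $\vec{X}_{it}$ conditional on $\vec{A}_i = \vec{a}$, and by Assumption \ref{ass:error}(c) its conditional c.d.f. is continuous. Thus the hypotheses of Lemma \ref{lem:gradient} are met for the conditional-on-$\vec{a}$ model, so the function $\vec{p}(\vec{v},\vec{a}) = E[\vec{Y}_{it}\mid \vec{X}_{it}'\beta = \vec{v}, \vec{A}_i = \vec{a}]$ is cyclic monotone in $\vec{v}$. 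Applied to the cycle $\beta'\vec{x}_1,\dots,\beta'\vec{x}_M,\beta'\vec{x}_1$ of points in $\mathbb{R}^K$ this yields inequality (\ref{cmpm}).

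Second, I would use Assumption \ref{ass:error}(b) to upgrade the per-period conditional expectation on the left-hand side. Since conditional on $\vec{A}_i$ the choice $\vec{Y}_{it_m}$ depends on the covariates only through $\vec{X}_{it_m}$, for any cycle $t_1,\dots,t_M,t_{M+1}=t_1$ in $\{1,\dots,T\}$ we have the equality in (\ref{eq34}), i.e.
\begin{equation*}
\vec{p}(\beta'\vec{X}_{it_m},\vec{A}_i) = E[\vec{Y}_{it_m}\mid \vec{X}_{it_1},\dots,\vec{X}_{it_M},\vec{A}_i]\quad \text{a.s.}
\end{equation*}
Substituting this into the cyclic monotonicity inequality of step one, evaluated at the random draw $(\vec{X}_{it_1},\dots,\vec{X}_{it_M},\vec{A}_i)$, produces (\ref{preintegratem}): the same inequality with the identified conditional expectation in place of the latent $\vec{p}$, holding almost surely.

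Finally, I would take conditional expectations of both sides of (\ref{preintegratem}) given $(\vec{X}_{it_1},\dots,\vec{X}_{it_M})$. The differences $(\vec{X}_{it_m}'\beta - \vec{X}_{it_{m+1}}'\beta)$ are measurable with respect to this sigma-field and come out of the expectation, while the law of iterated expectations collapses $E[E[\vec{Y}_{it_m}\mid \vec{X}_{it_1},\dots,\vec{X}_{it_M},\vec{A}_i]\mid \vec{X}_{it_1},\dots,\vec{X}_{it_M}]$ to $E[\vec{Y}_{it_m}\mid \vec{X}_{it_1},\dots,\vec{X}_{it_M}]$. The inequality is preserved since it holds almost surely pointwise. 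This delivers the statement. The only subtlety worth flagging is the second step: one must verify that Assumption \ref{ass:error}(b) is strong enough to allow conditioning on the entire cycle of covariates (not only $\vec{X}_{it_m}$) before integrating out $\vec{A}_i$, since otherwise the surviving $\vec{A}_i$ dependence would not be measurable with respect to the sigma-field we take expectation against. Once that step is in place, the rest is routine manipulation of iterated expectations.
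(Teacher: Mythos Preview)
Your proposal is correct and follows essentially the same three-step argument as the paper: apply Lemma \ref{lem:gradient}(d) conditionally on $\vec{A}_i=\vec{a}$ using Assumptions \ref{ass:error}(a),(c) to get (\ref{cmpm}); use Assumption \ref{ass:error}(b) to rewrite $\vec{p}(\beta'\vec{x}_m,\vec{a})$ as the full-cycle conditional expectation (\ref{eq34}), yielding (\ref{preintegratem}); then integrate out $\vec{A}_i$ by iterated expectations to arrive at (\ref{panelidm}). The subtlety you flag about step two is exactly the role Assumption \ref{ass:error}(b) plays in the paper's derivation.
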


\subsection{Point Identification of Model Parameters}

To see the amount of identification information the inequalities in (\ref{panelidm}) contain, write (\ref{panelidm}) as
\begin{equation}
\beta'g_{t_1,\dots,t_M}(\vec{X}_{it_1},\dots,\vec{X}_{it_M})\geq 0,~ \forall t_1,t_2,\dots,t_M, t_{M+1}=t_{1}\in\{1,\dots,T\}, ~\forall M\leq T\text{, almost surely},\label{panelid2}
\end{equation}
where $g_{t_1,\dots,t_M}(\vec{x}_1,\dots,\vec{x}_{M}) = \sum_{m=1}^M\{(\vec{x}_{m}-\vec{x}_{m+1})E[\vec{Y}_{it_m}|\vec{X}_{it_1}=\vec{x}_1,\dots,\vec{X}_{it_M}=\vec{x}_M]\}$. Let ${\cal G}_{t_1,\dots,t_M} $ be the support of $g_{t_1,\dots,t_M}(\vec{X}_{it_1},\dots,\vec{X}_{it_M})$. Let
 \begin{equation}
 {\cal G}_M = \cup_{t_1,\dots,t_M\in\{1,\dots,T\}} ~{\cal G}_{t_1,\dots,t_M}
\end{equation}
Let 
\begin{equation}
{\cal G} = \cup_{M=2,\dots,T} ~{\cal G}_{M}.
\end{equation}
Then the identified set (denoted by $B_0$) of $\beta$ defined by the restriction (\ref{panelidm}) is the set 
\begin{equation}
B_0 = \{b\in R^{d_x}: \|b\| = 1, b'g\geq 0, ~\forall g\in {\cal G}\}.\label{idset}
\end{equation}
This set is a proper subset of $\{b\in R^{d_x}:\|b\|=1\}$ as long as ${\cal G}$ contains at least one nonzero element. 

In order to shrink $B_0$, the set ${\cal G}$ must grow richer. In fact ${\cal G}$ must grow in such a way that $cc({\cal G})$ grows bigger, where $cc({\cal G})$ is the closed convex cone generated by ${\cal G}$:
\begin{align}
cc({\cal G}) = closure(\{\lambda_1g_1+\lambda_2g_2: \lambda_1\geq 0, \lambda_2\geq 0, g_1, g_2\in {\cal G}\}).
\end{align}  
This is because, by elementary algebra, $B_0$ can be written as
\begin{align}
B_0 = \{b\in R^{d_x}: \|b\| = 1, b'g\geq 0, ~\forall g\in cc({\cal G})\}.\label{sharp}
\end{align}
When $cc({\cal G})$ is so rich that it becomes a half-space of $R^{d_x}$, $\beta$ is point identified, as shown in the following theorem, the proof of which is in Appendix \ref{proofs}.

\begin{theorem}\label{thm:setid} 
 The identified set $B_0 = \{\beta\}$ if and only if $cc({\cal G}) $ is a half-space of $R^{d_x}$.
\end{theorem}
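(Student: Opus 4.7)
The plan is to recognize the set $B_0$ as (the unit sphere intersection of) the polar/dual cone of $cc(\mathcal{G})$ and then use the classical bipolar theorem. Writing $C = cc(\mathcal{G})$ and $C^\ast = \{b \in \mathbb{R}^{d_x} : b'g \geq 0,\ \forall g \in C\}$ for its dual cone, equation (\ref{sharp}) says exactly
$$B_0 = C^\ast \cap \{b \in \mathbb{R}^{d_x} : \|b\| = 1\}.$$
Note that $\beta \in B_0$ automatically, since the identifying inequalities of Lemma \ref{lem:idineq} are satisfied at the true parameter; hence $\beta \in C^\ast$, and the question reduces to whether $C^\ast$ contains any unit vector other than $\beta$.

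For the sufficiency direction, I would assume $C$ is a closed half-space, i.e. $C = \{g : a'g \geq 0\}$ for some unit vector $a$. Then a direct computation (or standard duality) gives $C^\ast = \{\lambda a : \lambda \geq 0\}$, a ray. The intersection of a ray with the unit sphere is at most a single point, and since $\beta \in B_0$ is nonempty, $B_0 = \{\beta\}$ and in fact $a = \beta$.

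For the necessity direction, I would assume $B_0 = \{\beta\}$ and show $C^\ast$ is a ray through $\beta$. Suppose for contradiction that $C^\ast$ contained some $v \neq 0$ that is not a nonnegative multiple of $\beta$. Since $C^\ast$ is a cone, $v/\|v\| \in C^\ast$ is a unit vector distinct from $\beta$, contradicting $B_0 = \{\beta\}$. Therefore $C^\ast = \{\lambda \beta : \lambda \geq 0\}$. Applying the bipolar theorem — valid because $C = cc(\mathcal{G})$ is a closed convex cone by construction — gives
$$C = (C^\ast)^\ast = \{g \in \mathbb{R}^{d_x} : \beta'g \geq 0\},$$
which is precisely a half-space.

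The main obstacle is really bookkeeping rather than anything deep: one must verify that the reformulation (\ref{sharp}) using $cc(\mathcal{G})$ is legitimate (which the paper already states via elementary algebra), and that the bipolar theorem applies without additional qualifications (which it does since $cc(\mathcal{G})$ is defined as a closure). A minor subtlety to handle cleanly is ruling out the degenerate possibilities $C = \{0\}$ (which would give $C^\ast = \mathbb{R}^{d_x}$ and $B_0$ equal to the whole unit sphere, so $B_0 \neq \{\beta\}$) and $C = \mathbb{R}^{d_x}$ (which would give $C^\ast = \{0\}$ and $B_0 = \emptyset$, incompatible with $\beta \in B_0$); both cases are automatically consistent with the stated equivalence since neither is a half-space and neither yields $B_0 = \{\beta\}$.
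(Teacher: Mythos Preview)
Your proposal is correct. Both your argument and the paper's rest on the same polar-duality idea, but the packaging differs. The paper argues via the containment $cc(\mathcal{G}) \subseteq \bigcap_{b \in B_0}\{g : b'g \geq 0\}$ together with the observation that an intersection of closed half-spaces through the origin is itself a half-space if and only if all the half-spaces coincide, i.e.\ if and only if $B_0$ is a singleton. You instead identify $B_0$ with the unit vectors in the dual cone $C^\ast$ and invoke the bipolar theorem directly. The payoff of your formulation is that the necessity direction is fully transparent: once $C^\ast$ is shown to be the ray through $\beta$, the identity $C = (C^\ast)^\ast$ immediately delivers the half-space $\{g : \beta'g \geq 0\}$. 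In the paper's compressed version, passing from ``$B_0$ is a singleton'' to ``$cc(\mathcal{G})$ is a half-space'' (rather than merely contained in one) still implicitly requires the equality $cc(\mathcal{G}) = \bigcap_{b \in B_0}\{g : b'g \geq 0\}$, which is exactly the bipolar theorem for the closed convex cone $cc(\mathcal{G})$. So your argument is a more explicit rendering of the same content, and your handling of the degenerate cases $C = \{0\}$ and $C = \mathbb{R}^{d_x}$ is a nice touch that the paper leaves unmentioned.
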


Figure \ref{identpic} illustrates the identification argument, for the case when the vectors $\beta$ and $g$ lie in the Cartesian plane, and can be represented as points on the unit circle.  In this case, the identified set (\ref{idset}) can be visualized as the collection of norm 1 vectors that are acute with respect to all the vectors in ${\cal G}$.  In Panel (i), we show the worst case scenario where the set $cc({\cal G})$ consists of a single vector, given by the solid green arrow.   For this case, the identified set consists of the entire halfspace or halfcircle $OACB$, which are all the vectors which form an acute angle with the vector in $cc({\cal G})$.   

\begin{figure}\label{identpic}
\vspace{-2in}
$
\begin{array}{ccc}
\hspace{-1.75in}\includegraphics[width=0.75\linewidth]{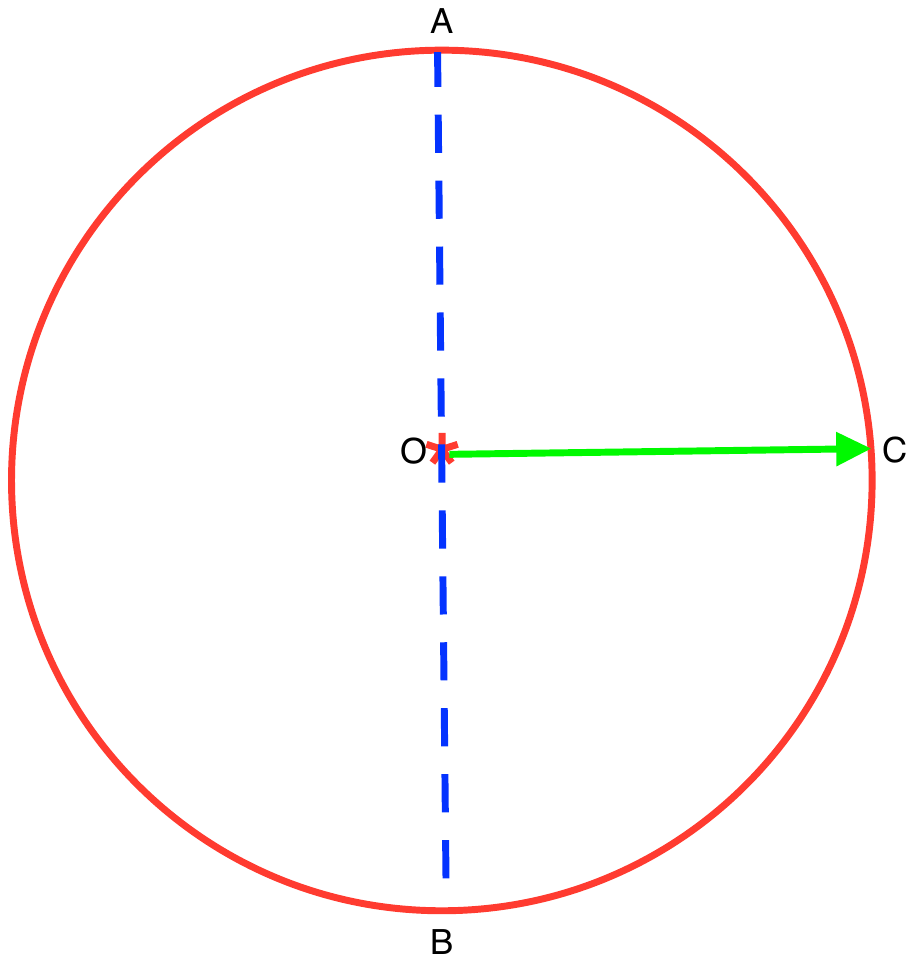}&
\hspace{-2.5in}\includegraphics[width=0.75\linewidth]{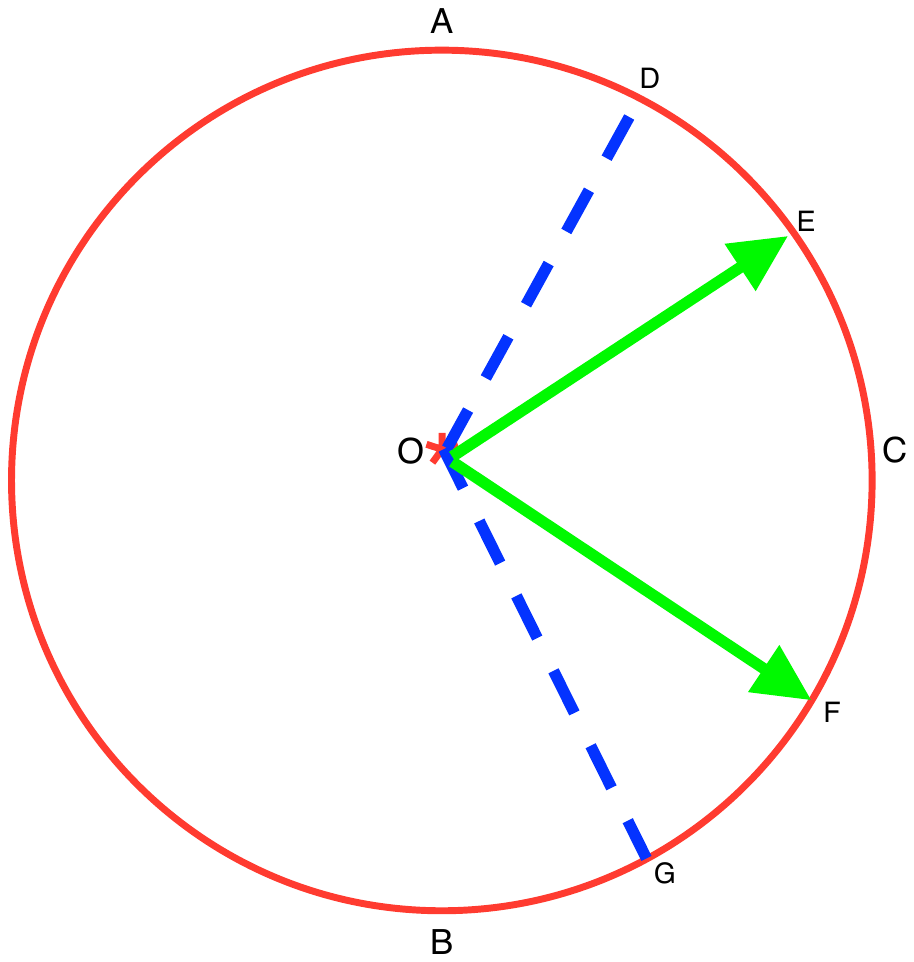}&
\hspace{-2.5in}\includegraphics[width=0.75\linewidth]{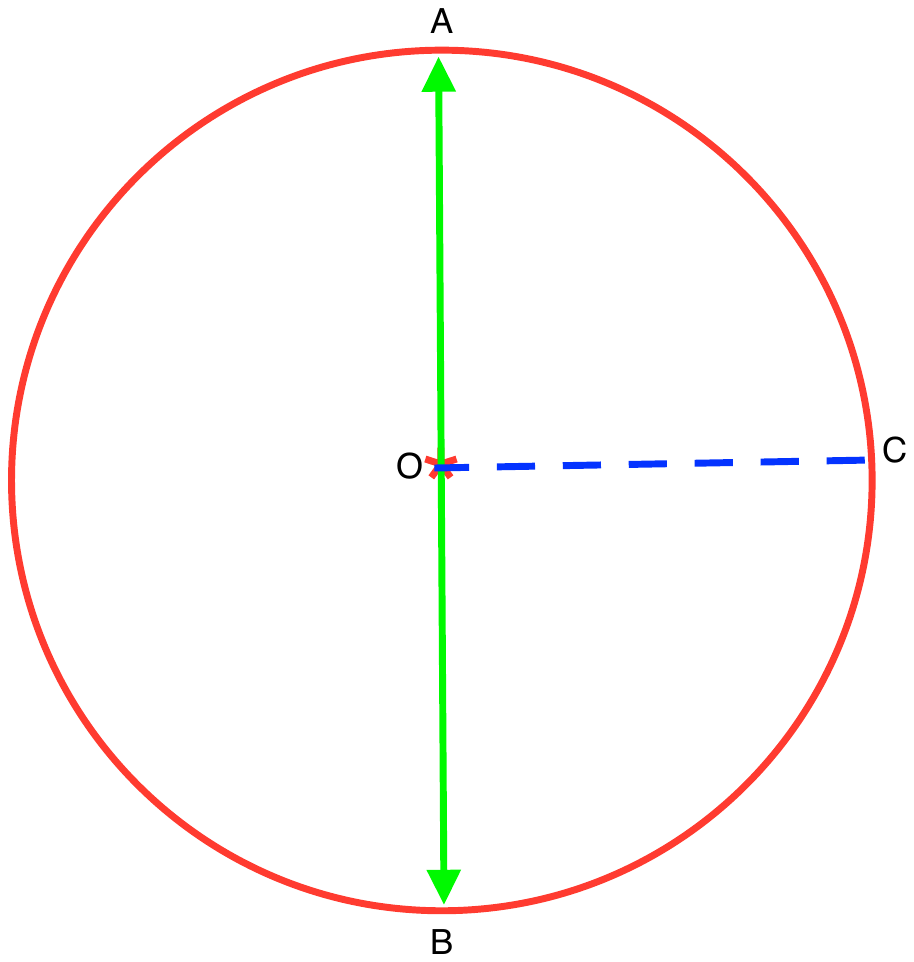}\\[-2in]
\hspace{-1.5in}\text{Panel (i)}&
\hspace{-2.5in}\text{Panel (ii)}&
\hspace{-2.5in}\text{Panel (iii)}\\
\end{array}
$
\caption{Point identification of $\beta$: geometric intuition}

{\footnotesize In all the above panels, the green arrows delineate the set $cc({\cal G})$ (defined in the text), and the blue dotted lines delineate the identified set for $\beta$.   Moving from Panel (i) to (iii), we see how the identified set shrinks as $cc({\cal G})$ grows.}
\end{figure}

The remaining panels show how the identified set shrinks as $cc({\cal G})$ becomes richer.   In panel (ii), $cc({\cal G})$ expands to the slice $OEF$ (bounded by the solid green arrows), which shrinks the identified set to $ODG$ (bounded by the dotted blue lines), which as before are the vectors that are acute with every vector in $cc({\cal G})$.   Finally, in panel (iii), we show the case of point identification: as $CC({\cal G})$ grows to become the entire halfspace/halfcircle $OACB$, the identified set shrinks down to the single vector $OC$, given by the blue dotted line.

\subsection{Primitive Point Identification Conditions}
Theorem \ref{thm:setid} gives a necessary and sufficient condition for the point identification of $\beta$ under the assumptions of Lemma \ref{lem:idineq}. The condition is based on the support of observables, and thus is in principle verifiable given an infinite amount of data. In finite samples, however, testing support richness is difficult if at all possible. Moreover,  it is difficult to logically argue for/against the condition or to compare it  to what is available in the literature because it involves the non-primitive components $E[\vec{Y}_{it}|\vec{X}_{it}]$. 

Here, we introduce conditions that are based on the model primitives $\vec{X}_{it}$ and $\vec{\epsilon}_{it}$. 
We focus on identification based on only the the length-2 cycles.  First, point identification using only the length-2 cycles implies point identification using more or all cycles because using more cycles adds restrictions.  Second, in practice, estimation with longer cycles not only is computationally more intensive, but also requires estimating a higher-dimensional conditional choice probability function ($E[\vec{Y}_{it_m}|\vec{X}_{it_1}, \vec{X}_{it_2},\dots,\vec{X}_{it_M}])$. One may be constrained to use only the length-2 cycles. Thus, identification based on only the length-2 cycles is arguably most practically useful. 

The following notation will be used. For each $k=1,\dots, K$, let $\vec{X}_{it}^{-k} = (X_{it}^1,\dots,X_{it}^{k-1},\allowbreak X_{it}^{k+1},\dots,X_{it}^K)$ and let $\vec{\cal X}_{-k}$ denote the support of $\vec{X}_{it}^{-k}$. For  a generic element $\vec{x}^{-k}$  in $\vec{X}_{it}^{-k}$, let $G_{s,t}^{k}(\vec{x}^{-k})$ be the conditional support of $X_{it}^k - X_{is}^k$ given that $\vec{X}_{it}^{-k} =\vec{X}_{is}^{-k}= \vec{x}^{-k}$. Let 
\begin{equation}
G = \cup_{s,t=1,\dots,T}\cup_{k=1,\dots,K}\cup_{\vec{x}^{-k}\in \vec{\cal X}_{-k}} G_{s,t}^k(\vec{x}^{-k}).\label{G}
\end{equation}
Also define the {\em cone} generated by $G$ as
\begin{equation}
cone(G) =\{\lambda g\in R^{d_x}: \lambda\geq 0, g\in G\}.
\end{equation}
Theorem \ref{thm:sufbdm} characterizes the first set of primitive sufficient conditions using the set $G$ defined in eq. (\ref{G}). 
\begin{assumption}\label{ass:error2m}
For every $k=1,\dots, K$, with positive probability, the conditional support of $\vec{\epsilon}_{it}$ given $\vec{A}_i$ is $R^K$. 
\end{assumption}
\begin{assumption}\label{ass:sufbdm}
The set $cone(G)$ is dense in $R^{d_x}$. 
\end{assumption}
\begin{theorem}[{\bfseries Sufficient Point Identification Conditions - Set A}]\label{thm:sufbdm} Under Assumptions \textup{\ref{ass:error}(a)-(b)} and \textup{\ref{ass:error2m}}, if Assumption \textup{\ref{ass:sufbdm}} holds,  then $cc({\cal G}_M)$ for $M=2$ is a half-space, and so is $cc({\cal G})$.
\end{theorem}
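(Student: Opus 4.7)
By Lemma~\ref{lem:idineq}, every element of $\mathcal{G}$ satisfies $\beta'g\ge 0$, so
$cc(\mathcal{G}_2)\subseteq cc(\mathcal{G})\subseteq H:=\{v\in R^{d_x}:\beta'v\ge 0\}$.
The two conclusions of the theorem therefore both reduce to the reverse inclusion $H\subseteq cc(\mathcal{G}_2)$, and it is enough to exhibit enough elements of $\mathcal{G}_2$ to fill $H$.

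The natural plan is to restrict to length-$2$ cycles in which only one column of the covariate matrix moves. Fix $k\in\{1,\dots,K\}$, $t_1,t_2\in\{1,\dots,T\}$, $\vec{x}^{-k}\in\vec{\cal X}_{-k}$, and $\delta\in G^k_{t_1,t_2}(\vec{x}^{-k})$; take $\vec{x}_1,\vec{x}_2$ that agree on all non-$k$ columns and differ by $\delta$ in column $k$. Since $\vec{x}_1-\vec{x}_2$ has $\delta$ in column $k$ and zero elsewhere, the formula for $g_{t_1,t_2}$ collapses to
\begin{equation*}
g_{t_1,t_2}(\vec{x}_1,\vec{x}_2)=\delta\,\Delta,\qquad \Delta:=E\bigl[Y^k_{it_1}-Y^k_{it_2}\bigm|\vec{X}_{it_1}=\vec{x}_1,\vec{X}_{it_2}=\vec{x}_2\bigr].
\end{equation*}
I would then show $\Delta$ has the same sign as $\beta'\delta$. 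By Assumption~\ref{ass:error}(a),(b) and iterated expectations, $\Delta = E\bigl[p^k(\beta'\vec{x}_1,\vec{A}_i)-p^k(\beta'\vec{x}_2,\vec{A}_i)\bigm|\vec{X}_{it_1},\vec{X}_{it_2}\bigr]$, and $\beta'\vec{x}_1$ and $\beta'\vec{x}_2$ differ only in their $k$-th coordinate, by $\beta'\delta$. Lemma~\ref{lem:gradient} supplies weak monotonicity of $p^k(\cdot,\vec{a})$ in its $k$-th argument for every $\vec{a}$; Assumption~\ref{ass:error2m} upgrades this to strict monotonicity on a positive-probability set of $\vec{a}$, because on that set the conditional distribution of $\vec{\epsilon}_{it}$ has full support on $R^K$, so $\mathcal{G}(\cdot,\vec{a})$ is strictly convex and its gradient $\vec{p}(\cdot,\vec{a})$ is strictly cyclically monotone. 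Integration over $\vec{A}_i$ preserves the strict sign, hence $g_{t_1,t_2}(\vec{x}_1,\vec{x}_2)$ is a non-negative multiple of $\operatorname{sign}(\beta'\delta)\,\delta$.

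Varying $(k,t_1,t_2,\vec{x}^{-k},\delta)$ then shows that $cc(\mathcal{G}_2)$ contains the closed cone generated by $\{g\in G:\beta'g\ge 0\}$; note that $G$ is symmetric under $\delta\mapsto-\delta$ (swap $t_1$ and $t_2$), so every element of $G$ can be oriented into $H$ without leaving $G$. Assumption~\ref{ass:sufbdm} makes $\mathrm{cone}(G)$ dense in $R^{d_x}$; combined with the symmetry of $G$ and the identity $\mathrm{cone}(G)\cap H=\mathrm{cone}(G\cap H)$, this forces $\mathrm{cone}(G\cap H)$ to be dense in $H$. Taking closures delivers $H\subseteq cc(\mathcal{G}_2)$, which with the sandwich $cc(\mathcal{G}_2)\subseteq cc(\mathcal{G})\subseteq H$ finishes both claims. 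The main obstacle in this plan is the strict-monotonicity step: one must verify that the positive-probability set of $\vec{a}$ on which $\vec{\epsilon}_{it}\mid\vec{A}_i$ has full support retains positive conditional probability given $(\vec{X}_{it_1},\vec{X}_{it_2})$ for a rich enough collection of conditioning values, so that the averaged $\Delta$ is strictly signed on enough of the support to drive the density argument. This is a technicality arising from the interaction of Assumptions~\ref{ass:error}(a) and~\ref{ass:error2m} and does not disturb the main line of attack.
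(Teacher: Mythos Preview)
Your proposal is correct and follows essentially the same route as the paper: both arguments (i) show that for each $\delta\in G$ with $\beta'\delta>0$ the associated length-$2$ element of $\mathcal G_2$ is a strictly positive multiple of $\delta$ (your sign-of-$\Delta$ computation is exactly the paper's chain of equalities leading to the strict inequality $E[Y^k_{it}-Y^k_{is}\mid\cdots]>0$), and then (ii) use the density of $\mathrm{cone}(G)$ to fill the open half-space and take closures. The technicality you flag---that the positive-probability set of $\vec a$ from Assumption~\ref{ass:error2m} must retain positive mass under the conditional law of $\vec A_i$ given the $X$-event---is handled no more explicitly in the paper's proof than in yours, so you are not missing anything the paper supplies.
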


The most interesting feature of Theorem \ref{thm:sufbdm} is that it establishes point identification allowing bounded, even discrete, support for all regressors. The following two examples illustrate this feature.\medskip

\noindent\textbf{Example 1 (Bounded Regressors).} Suppose that for two time periods, $t=t_1, t_2$, and some $k=1,\dots,K$, given the event that $\vec{X}_{it}^{-k}$ does not change across the two time periods, $X_{it}^k$ can change by any amount on the hypercube $[-c,c]^{d_x}$ for some $c$ (no matter how small), then $G$ contains   $[-c,c]^{d_x}$, and $cone(G) = R^{d_x}$. Note that the covariates that are held fixed, i.e., $\vec{X}^{-k}_{it}$, can be finite-valued. 
\medskip

\noindent\textbf{Example 2 (Discrete Regressors).} Suppose that $d_x=2$, and  that, for two time periods $t,s$ and some $k=1,\dots,K$, given the event that $\vec{X}_{it}^{-k}$ does not change across the two time periods, the support of $(X_{it}^{k,\prime}, X_{is}^{k,\prime})' $ is $\{0,1, 1/2, 1/3, 1/4, \dots\}^4$. Then it can be verified that $\{g_1/g_2:(g_1,g_2)'\in G ~s.t.~g_2\neq 0\}$ contains the set of all rational numbers, which implies that $cone(G)$, being a superset of the union of all rays with rational directions in $R^2$, is dense in $R^2$.  The same conclusion can be drawn when the support of $(X_{it}^{k,\prime}, X_{is}^{k,\prime})' $ is $\{1,2,3,\dots\}^4$, too. Like in the previous example, the covariates that are held fixed can be finite-valued.\footnote{One may argue that ``true'' discrete variables do not take a countably infinite number of values. But if it takes a reasonably large number of values, this example may be considered a good theoretical approximation. In Section \ref{sec:num} below, we use a numerical example to illustrate how fast point identification is approached as we add support points to discrete random variables.}  
\medskip

Theorem \ref{thm:sufbdm} allows some finite-valued regressors as discussed in the examples above. However, it does not allow, for example, that for some $j, j'=1,\dots, d_x$ and $j\neq j'$, $X_{j, it}^k$ and $X_{j',it}^k$ are finite-valued for all $k=1,\dots, K$. In that case, the projection of $G$ onto its $(j,j')$th coordinates is finite-valued and cannot generate a cone that is dense in $R^2$. Thus, $G$ cannot generate a cone dense in $R^{d_x}$ either. Next, we present a different set of sufficient conditions that allows this case at the expense of requiring a regressor with large support. 

For $j=1,\dots, d_x$, let $g_j$ denote the $j$th element of the $d_x$-dimensional vector $g$, and let $g_{-j} = (g_1,\dots, g_{j-1},g_{j+1},\dots, g_{d_x})'$. Let $G_{-1} = \{g_{-1}\in R^{d_x-1}: \exists g_1\in R ~s.t.~ (g_1,g_{-1}')'\in g\}$. For any $g_{-1}\in G_{-1}$, let $G_1(g_{-1}) = \{g_1\in R: (g_1,g_{-1}')' \in G\}$.  Similarly define $G_{-j}$ and $G_j(g_{-j})$ for $j=2,\dots,d_x$. 

\begin{assumption}\label{ass:suflgm}
For some $j\in\{1,\dots,d_x\}$, 

\textup{(a)} $G_j(g_{-j}) = R$ for all $g_{-j}$ in  $G_{-j}$, 

\textup{(b)} $G_{-j}$ is not contained in a proper linear subspace of $R^{d_x-1}$,  and 

\textup{(c)} the $j$th element of $\beta$, denoted by $\beta_j$, is nonzero.
\end{assumption}

\begin{theorem}[{\bfseries Sufficient Point Identification Conditions - Set B}]\label{thm:suflgm} Under Assumptions \textup{\ref{ass:error}(a)-(b)} and \textup{\ref{ass:error2m}}, if Assumption \textup{\ref{ass:suflgm}} holds,  then $cc({\cal G}_M)$ for $M=2$ is a half-space in $R^{d_x}$ and so is $cc({\cal G})$. 
\end{theorem}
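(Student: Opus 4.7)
The plan is to show that $cc({\cal G}_M)$ for $M=2$ coincides with the closed half-space $H := \{v \in R^{d_x} : \beta' v \geq 0\}$; the inclusion $cc({\cal G}_2) \subseteq H$ follows immediately from Lemma \ref{lem:idineq} evaluated at $M=2$, so all the content lies in the reverse inclusion. The argument splits into two parts: first, translate primitive differences $d \in G$ into elements of $cc({\cal G}_2)$ (using Assumption \ref{ass:error2m}); second, use Assumption \ref{ass:suflgm} to rebuild every direction in $H$ out of those pieces.

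\textbf{Step 1: from $G$ into $cc({\cal G}_2)$.} Fix $t_1 \ne t_2$, $k \in \{1,\dots,K\}$, and $\vec{x}^{-k}\in\vec{\cal X}_{-k}$; let $d := x_1^k - x_2^k$ range over $G^k_{t_1,t_2}(\vec{x}^{-k})$ and set $\vec{x}_m = (\vec{x}^{-k}, x_m^k)$. Because $\vec{x}_1 - \vec{x}_2$ has only its $k$-th column nonzero, $g_{t_1,t_2}(\vec{x}_1,\vec{x}_2) = d\cdot\Delta$, with $\Delta := E[Y_{it_1}^k\mid \vec{X}_{it_1},\vec{X}_{it_2}] - E[Y_{it_2}^k\mid \vec{X}_{it_1},\vec{X}_{it_2}]$. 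Under Assumptions \ref{ass:error}(a)-(b) this rewrites as $\Delta = E[p^k(\vec{x}_1,\vec{A}_i) - p^k(\vec{x}_2,\vec{A}_i)\mid \vec{X}_{it_1},\vec{X}_{it_2}]$, where $p^k(\vec{x},\vec{a})$ is the model's choice-$k$ probability. Under Assumption \ref{ass:error2m} (full support of $\vec{\epsilon}$ given $\vec{A}$), $p^k(\vec{x}^{-k}, x^k, \vec{a})$ is strictly monotone in $\beta' x^k$ with $\vec{x}^{-k}$ held fixed, so $\mathrm{sign}(\Delta) = \mathrm{sign}(\beta' d)$ whenever $\beta' d \ne 0$. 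Writing $G_\pm := \{d \in G : \pm\beta' d > 0\}$, every $d \in G_+$ is then a positive scalar multiple of an element of ${\cal G}_2$, and for every $d \in G_-$ the vector $-d$ is a positive scalar multiple of an element of ${\cal G}_2$; consequently $G_+ \cup (-G_-) \subseteq cc({\cal G}_2)$.

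\textbf{Step 2: building $H$ from $G_+\cup(-G_-)$.} Because swapping $s,t$ negates $X_{it}^k - X_{is}^k$, the set $G$ and hence $G_{-j}$ are symmetric about the origin; combined with Assumption \ref{ass:suflgm}(b), this gives $cone(G_{-j}) = span(G_{-j}) = R^{d_x-1}$. WLOG $\beta_j > 0$ by Assumption \ref{ass:suflgm}(c). Fix $w \in R^{d_x}$ with $\beta' w > 0$. If $w_{-j} = 0$, then $w = w_j e_j$ with $w_j > 0$, and for any $v \in G_{-j}$ Assumption \ref{ass:suflgm}(a) gives $(n, v) \in G$ for every real $n$; for large $n$ we have $(n, v) \in G_+$ and $n^{-1}(n, v) \to e_j$, putting $e_j$ in $cc(G_+)$ by closedness. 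Otherwise, write $w_{-j} = \sum_{i=1}^n \lambda_i g_{-j}^{(i)}$ with $\lambda_i \ge 0$, $g_{-j}^{(i)} \in G_{-j}$, and let $L_i := -\beta_{-j}' g_{-j}^{(i)}/\beta_j$. I then pick $c_i > L_i$ with $\sum_i \lambda_i c_i = w_j$; this is feasible precisely because $w_j > \sum_i \lambda_i L_i = -\beta_{-j}' w_{-j}/\beta_j$, which rearranges to $\beta' w > 0$. By construction $(c_i, g_{-j}^{(i)}) \in G_+$, so $w = \sum_i \lambda_i (c_i, g_{-j}^{(i)}) \in cc(G_+) \subseteq cc({\cal G}_2)$. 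Taking closure gives $H \subseteq cc({\cal G}_2)$, and since ${\cal G}_2 \subseteq {\cal G} \subseteq H$ by Lemma \ref{lem:idineq}, the identity extends to $cc({\cal G}) = H$.

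\textbf{Main obstacle.} The delicate piece is the strict-monotonicity claim in Step 1: Lemma \ref{lem:gradient} only delivers the weak cyclic-monotonicity of $\vec{p}(\cdot,\vec{a})$. The full-support hypothesis of Assumption \ref{ass:error2m} should upgrade this via strict convexity of the social surplus function ${\cal G}$, but one must verify that strictness survives the conditional expectation over $\vec{A}_i \mid \vec{X}_{it_1}, \vec{X}_{it_2}$---which holds as long as the integrand $p^k(\vec{x}_1,\vec{a}) - p^k(\vec{x}_2,\vec{a})$ has the sign of $\beta' d$ for every $\vec{a}$, so that the conditional expectation (against any conditional law of $\vec{A}_i$) preserves it.
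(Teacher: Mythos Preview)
Your proof is correct and follows essentially the same route as the paper: Step~1 reproduces the inclusion $cone(\tilde G)\subseteq cone({\cal G}_2)$ (with $\tilde G=G_+$) that the paper borrows from the proof of Theorem~\ref{thm:sufbdm}, and Step~2 is the same nonnegative decomposition of an arbitrary $w\in H$ into elements of $G_+$ via $G_{-j}$-spanning and Assumption~\ref{ass:suflgm}(a) (the paper simply makes the explicit choice $c_i=L_i+d/\sum_m\lambda_m$, one feasible solution to your constraints $c_i>L_i$, $\sum_i\lambda_ic_i=w_j$). One small sharpening for your ``Main obstacle'': Assumption~\ref{ass:error2m} gives full support of $\vec\epsilon$ only \emph{with positive probability} over $\vec A_i$, so the integrand $p^k(\vec x_1,\vec a)-p^k(\vec x_2,\vec a)$ has the \emph{weak} sign of $\beta'd$ for every $\vec a$ and the \emph{strict} sign on a positive-probability set---still enough for $\Delta$ to inherit the strict sign after averaging.
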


Assumption \ref{ass:suflgm} is reminiscent of the covariate conditions in Manski (1987) for panel binary choice models with fixed effect (and also of the identification conditions in Manski (1975, 1988) and Han (1987) for cross-sectional binary choice models). Assumption \ref{ass:suflgm} is clearly different as it applies to the general multinomial choice case. There is still some difference even when we specialize to the binary choice case, which we discuss next.

In a two-period panel setting, Manski (1987) requires the support of one non-redundant element (say, $j$) of $X_{i1}^1 - X_{i2}^1$ to be $R$ conditional on the other elements. Assumption \ref{ass:suflgm}(a) is weaker in that it only requires this conditional support to contain either $[0,\infty)$ or $(-\infty,0]$ because $G_j(g_{-j})$ is the union of the conditional support of $X_{j,i1}^1 - X_{j, i2}^1$ and $X_{j,i2}^1 - X_{j,i1}^1$. Such a difference can be meaningful when, for example, $X_{j,it}^1$ can only grow over time.

%It is worth noting that $X_{j,it}$ having everywhere positive Lebesgue density is neither necessary or sufficient for Assumption \ref{ass:suflg}(a). First, it is not necessary because the condition can hold, for example, when the  support of $(X_{j,it},X_{j,is})'$ is $[c,\infty)^2$  for an arbitrary constant $c$, in which case the support of $X_{j,it} - X_{j,is}$ is $R$. This aspect is important in practice because many economic variables do not have support on the negative part of the real line.   Second, it is not sufficient because sometimes $X_{j,it}$ (e.g. age) can only grow by a fixed amount (e.g. 1) each time period, in which case $X_{j,it}-X_{j,is}$ can only take one value and $G$ can only take finite number values. 

Assumption \ref{ass:sufbdm} does not require any regressor to have
unbounded support, but it only allows discreteness to a limited extent. On
the other hand, Assumption \ref{ass:suflgm} allows almost all
regressors to be discrete (with finite support), but requires at least
one regressor with unbounded support. Comparing the two sets of
conditions, one notice a tradeoff between large support of one
regressor and rich support of all other regressors. In some sense, such a tradeoff is necessary.  In Appendix \ref{primnec}, we show that, for the special case of binary choice, when there is a finite-valued regressor, it is a necessary condition for point identification that at least {\em some} of the other regressors have unbounded support.   Because we were only able to show this for the binary choice case, we put that result in the appendix.

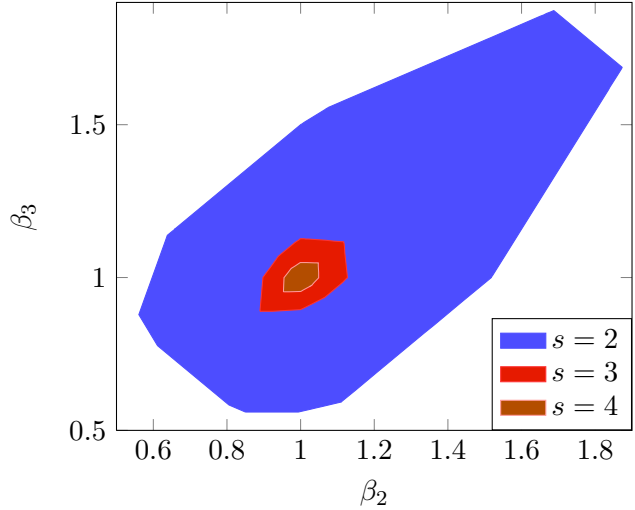
\begin{figure}\label{fig:tri}
\centering
\begin{tikzpicture}[scale = 1]
\begin{axis}[
xlabel = $\beta_2$,
ylabel = $\beta_3$,
xmin = 0.5,
xmax = 1.9,
ymin = 0.5,
ymax = 1.9,
legend style = {at={(1,0)}, anchor = south east},
]
\addplot [fill =blue!70!white,postaction={pattern = crosshatch},draw =blue!70!white,area legend] table[col sep = comma, x = x, y = y]{p2ch.txt};

\addplot [fill =red!90!green,postaction={pattern = grid},draw =red!80!white,area legend] table[col sep = comma, x = x, y = y]{p3ch.txt};

\addplot [fill =red!70!green,postaction = {pattern = north east lines},draw = red!40!white,area legend] table[col sep = comma, x = x, y = y]{p4ch.txt};

\legend{$s=2$,$s=3$, $s=4$}
\end{axis}
\end{tikzpicture}
\caption{Identified Sets Based on Cyclic Montononicity For the Trinary Choice Model Where Each Regressor Has a $s$-Point Support.}
\end{figure}

\subsection{Numerical Illustration}\label{sec:num}
In this subsection, we use a numerical example to illustrate the identifying power of cyclic monotonicity.  
We consider a three-choice model, where
the  ${X}_{it}^k$, is a  3-dimensional vector for $k=1,2$. Consider a two-period panel, i.e., $T=2$. Let $\{u_{it}^k\}_{k=0, 1, 2; ~t=1,2}$ be  independent type-I extreme value random variables, and let $\epsilon_{it}^k = u_{it}^k  - u_{it}^0$ for $k = 1,2; ~t=1,2$. Normalize $\beta_1 = 1$, and let the true value of $\beta_2, \beta_3$ both be $1$.  Let the support of $(X_{j, it}^k)_{k=1,2; ~j=1,2,3;~ t=1,2}$ be $\{1, 1/2, \dots, 1/s\}^{12}$. Let $A_i^1 = \omega_i^1 X_{1,i1}^1$ and $A_i^2 = \omega_i^2 x_{3,i1}^1$ for binary random variable $\omega_i^1$ and $\omega^2_i$. The variable $\omega_i^1$ takes the values  $1$ and $2$ each with probability $0.5$ and the variable $\omega^2_i$ takes the values $0$ and $-1$ each with probability $0.5$. The random variables $\omega_i^1$ and $\omega^2_i$ are mutually independent and are joint independent of  $(\vec{\epsilon}_{i1},\vec{\epsilon}_{i2})$. 

The identified sets based on length-2 cycles are drawn in Figure \ref{fig:tri}. We vary $s$ from $2$ to $4$ to see the change of $B_0$ with $s$.  As we can see, the identified set shrinks quickly as we add more support points.

\section{Estimation and Consistency}
Since the identification in this paper is based on inequalities rather than equalities, standard estimation and inference methods do not apply. Nevertheless, we propose a computationally easy consistent estimator for $\beta$. Confidence intervals for $\beta$ can be constructed using the methods proposed for conditional moment inequalities because the identifying conditions in (\ref{panelidm}) are conditional moment inequalities.\footnote{See, for example, Andrews and Shi (2013)  and Chernozhukov, Lee and Rosen (2013). These methods are partial-identification robust, and thus can be applied even when our point identification assumptions do not hold.}  Therefore, we do not discuss  it here.

The identification results presented above are based on length-2 cycles. Thus, we focus on the length-2 cycles for estimation as well, although in principle one could use cycles of any length. We do so partly for notational tractability, and partly because length 2-cycles are available for all panel data sets and are computationally simpler.

%In the identification discussion above, we used the normalization $\|\beta\|=1$ to fix the scale of $\beta$ for notational simplicity. For estimation, it is more convenient to consider linear or piecewise linear normalizations. Thus, we adopt the following normalization instead:
%\begin{equation}
%\max_{j=1,\dots,J}|\beta_j| = 1. \label{norm2}
%\end{equation}
%If the sign of one element of $\beta$ is known (say, we know that $\beta_1>0$), it is  most convenient for both computation and interpretation to use the normalization $\beta_1 = 1$. The normalization (\ref{norm2}) generalizes the normalization that $\beta_1=1$ and allows the signs of the elements of $\beta$ to be unknown.

In the asymptotic analysis, we consider the case of a short panel; that is, the number of time period $T$ is fixed and the number of agents $n\to\infty$. Based on the panel data set, suppose that there is a uniformly consistent estimator $\hat{\vec{p}}_{j|s,t}(\vec{x}_s,\vec{x}_t)$ for $E(\vec{Y}_{ij}|\vec{X}_{it}=\vec{x}_t, \vec{X}_{is}=\vec{x}_s)$ for all $1\leq s<t\leq T$ and for $j=s,t$. Then a consistent estimator of $\beta$ can be obtained as $\widehat{\beta} = \widetilde{\beta}/\|\widetilde{\beta}\|$, where
\begin{equation}
\widetilde{\beta} =
\arg\min_{b\in R^{d_x}:\max_{j=1,\dots,J}|b_j|=1}Q_n(b),\quad\text{ and}
\end{equation} 
\begin{equation}
Q_n(b)=
\max_{1\leq s<t\leq T}n^{-1}\sum_{i=1}^n \left[(b'\vec{X}_{is} - b'\vec{X}_{it})(\hat{\vec{p}}_{s|s,t}(\vec{X}_{is}, \vec{X}_{it}) - \hat{\vec{p}}_{t|s,t}(\vec{X}_{is}, \vec{X}_{it}))\right]_{-},
\end{equation}
where $[x]_{-} = |\min\{x, 0\}|$. The estimator is easy to compute because $Q_n(b)$ is a convex function and the constraint set of the minimization problem is the union of $2d_x$ convex sets. \footnote{An alternative candidate for $\widehat{\beta}$ is $\arg\min_{b\in R^{d_x}:\|b\|=1} Q_n(b)$. However, obtaining this estimator requires minimizing a convex function on a non-convex set, which is computationally less attractive.}

The following theorem shows the consistency of $\widehat{\beta}$.
\begin{assumption}\label{ass:consistency}
\textup{(a)} The set $cc({\cal G}_M) $ is a half-space for $M=2$.

\textup{(b)} $\sup_{\vec{x}_{s},\vec{x}_t\in\vec{\cal X}}\sup_{j=s,t}\|\hat{\vec{p}}_{j|s,t}(\vec{x}_s,\vec{x}_t) - \vec{p}_{j|s,t}(\vec{x}_s,\vec{x}_t)\|\to_p 0$  as $n\to\infty$ for all $1\leq s<t\leq T$, where $\vec{p}_{j|s,t}(\vec{x}_s,\vec{x}_t) = E[\vec{Y}_{ij}|\vec{X}_{is}= \vec{x}_s, \vec{X}_{it} = \vec{x}_t]$ for $j=s,t$, and

\textup{(c)} $E[\|\vec{X}_{it}\|]\leq \infty$.

\end{assumption}
\begin{theorem}[Consistency]\label{thm:Consistency} Suppose that Assumptions \textup{\ref{ass:error}} and \textup{\ref{ass:consistency}} hold. Then, $\widehat{\beta}\to_p \beta$ as $n\to\infty$.
\end{theorem}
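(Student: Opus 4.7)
The plan is the standard argmin-consistency route for M-estimators. Define the population criterion
\[
Q(b)=\max_{1\leq s<t\leq T} L_{s,t}(b),\quad L_{s,t}(b)=E\!\left\{\bigl[(b'\vec{X}_{is}-b'\vec{X}_{it})\bigl(\vec{p}_{s|s,t}(\vec{X}_{is},\vec{X}_{it})-\vec{p}_{t|s,t}(\vec{X}_{is},\vec{X}_{it})\bigr)\bigr]_{-}\right\},
\]
and let $\beta^{\ast}=\beta/\max_{j}|\beta_{j}|$, so $\beta^{\ast}$ lies in the constraint set $B^{\infty}:=\{b\in\mathbb{R}^{d_{x}}:\max_{j}|b_{j}|=1\}$ and $\beta^{\ast}/\|\beta^{\ast}\|=\beta$. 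The first step is to show that $\beta^{\ast}$ is the unique minimizer of $Q$ on $B^{\infty}$. Nonnegativity of $Q$ is automatic from $[\cdot]_{-}\geq 0$. Applying Lemma \ref{lem:idineq} to the length-2 cycle $(t_{1},t_{2})=(s,t)$ yields $(\beta'\vec{X}_{is}-\beta'\vec{X}_{it})(\vec{p}_{s|s,t}-\vec{p}_{t|s,t})\geq 0$ almost surely, hence $Q(\beta^{\ast})=0$. Conversely, if $Q(b)=0$ for $b\in B^{\infty}$, the nonnegative integrand must vanish almost surely for every $(s,t)$, which gives $b'g\geq 0$ for all $g\in{\cal G}_{2}$, and therefore for all $g\in cc({\cal G}_{2})$. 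By Assumption \ref{ass:consistency}(a), $cc({\cal G}_{2})$ is a half-space, and by Theorem \ref{thm:setid} its unique unit normal is $\beta$; an elementary halfspace argument then forces $b\in\{\lambda\beta:\lambda\geq 0\}$, and intersecting with $B^{\infty}$ pins down $b=\beta^{\ast}$.

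The second step is to establish $\sup_{b\in B^{\infty}}|Q_{n}(b)-Q(b)|\to_{p}0$. Since a finite maximum is $1$-Lipschitz it suffices to handle each summand $(s,t)$ separately. Let $\widetilde{Q}_{n}^{(s,t)}(b)$ be the infeasible analogue of the $(s,t)$ summand of $Q_{n}$ with $\hat{\vec{p}}$ replaced by $\vec{p}$, and decompose $Q_{n}^{(s,t)}(b)-L_{s,t}(b)=[Q_{n}^{(s,t)}(b)-\widetilde{Q}_{n}^{(s,t)}(b)]+[\widetilde{Q}_{n}^{(s,t)}(b)-L_{s,t}(b)]$. The plug-in piece is controlled using the $1$-Lipschitz bound $|[x]_{-}-[y]_{-}|\leq|x-y|$ and Cauchy--Schwarz:
\[
\sup_{b\in B^{\infty}}\bigl|Q_{n}^{(s,t)}(b)-\widetilde{Q}_{n}^{(s,t)}(b)\bigr|\leq 2\Bigl(\sup_{\vec{x}_{s},\vec{x}_{t}\in\vec{{\cal X}}}\bigl\|\hat{\vec{p}}_{\cdot|s,t}-\vec{p}_{\cdot|s,t}\bigr\|\Bigr)\sup_{b\in B^{\infty}}\!\|b\|\cdot n^{-1}\sum_{i=1}^{n}\bigl(\|\vec{X}_{is}\|+\|\vec{X}_{it}\|\bigr),
\]
which is $o_{p}(1)$ by Assumption \ref{ass:consistency}(b), boundedness of $B^{\infty}$, and the WLLN under Assumption \ref{ass:consistency}(c). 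The infeasible piece is a uniform LLN on the compact set $B^{\infty}$: the integrand is convex and Lipschitz in $b$ with envelope Lipschitz constant bounded by $2\sqrt{K}(\|\vec{X}_{is}\|+\|\vec{X}_{it}\|)\in L^{1}$ by Assumption \ref{ass:consistency}(c), so a pointwise LLN combined with stochastic equicontinuity (or, equivalently, a convex-function uniform LLN) delivers the required convergence.

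Having obtained a continuous population objective $Q$ on the compact set $B^{\infty}$ with unique minimizer $\beta^{\ast}$, together with uniform convergence of $Q_{n}$ to $Q$, the standard M-estimator consistency theorem (e.g.\ Theorem 2.1 of Newey and McFadden (1994)) yields $\widetilde{\beta}\to_{p}\beta^{\ast}$. Because $\beta^{\ast}\neq 0$ everywhere on $B^{\infty}$, the normalization map $b\mapsto b/\|b\|$ is continuous at $\beta^{\ast}$, so the continuous mapping theorem gives $\widehat{\beta}=\widetilde{\beta}/\|\widetilde{\beta}\|\to_{p}\beta^{\ast}/\|\beta^{\ast}\|=\beta$. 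I expect the main obstacle to be the uniform LLN for the infeasible criterion: it is where one must exploit the convexity and Lipschitz structure of the integrand in $b$ to get by with only a first-moment assumption on $\vec{X}_{it}$; once that is in hand, the remaining pieces are routine.
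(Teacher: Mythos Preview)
Your proposal is correct and follows essentially the same route as the paper: identify the population criterion $Q$, show $Q(\beta^\ast)=0$ from the length-2 cyclic-monotonicity inequalities, invoke the half-space condition (Assumption \ref{ass:consistency}(a)) together with Theorem \ref{thm:setid} for uniqueness, establish uniform convergence of $Q_n$ to $Q$ via the decomposition into a plug-in piece (controlled by Assumption \ref{ass:consistency}(b) and the Lipschitz property of $[\cdot]_-$) and an infeasible piece (handled by a ULLN using the $L^1$ envelope from Assumption \ref{ass:consistency}(c)), and finish with Newey--McFadden plus the continuous mapping theorem for the renormalization. The only organizational difference is that the paper proves stochastic equicontinuity of $Q_n$ directly on the whole criterion and then verifies pointwise convergence, whereas you work per-$(s,t)$ summand and appeal to a convex/Lipschitz ULLN; both are equivalent here.
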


The consistency result in Theorem \ref{thm:Consistency} relies on a uniformly consistent estimator of the conditional choice probability $\vec{p}_{j|s,t}(\vec{x}_s,\vec{x}_t)$. Such estimators are abundant in the nonparametric regression literature; see for example, Cheng (1984)  for the $k$-nearest neighbor estimator, Chapter 2 of Li and Racine (2006) for kernel regression estimators, and Hirano, Imbens, and Ridder (2009) for a sieve logit estimator. Deriving the convergence rate of $\hat{\beta}$ appears to be a difficult problem and is left for future work.

\subsection{Monte Carlo Simulation}
Consider a trinary choice example and a two-period panel. Let $X_{it}^k$ be a three-dimensional covariate vector: $X_{it}^k = (X_{j,it}^k)_{j=1,2,3}$. Let $(X_{j,it}^k)_{j=1,2,3; k=1,2; t=1,2}$ be independent uniform random variables in $[0,1]$. Let $A_i^k = (\omega_i^k+\sum_{j=1}^3X_{j,i1}^k)/4$ for $k=1,2$, $t=1,2$, where $\omega_{i}^{k}$ is uniform in $[0,1]$, independent across $k$ and independent of other model primitives.
Let \begin{equation}
(u_{it}^0, u_{it}^1, u_{it}^2)\sim N\left(\begin{pmatrix}0\\0\\0\end{pmatrix},\begin{pmatrix}1&0&0\\0&1&0.5\\0&0.5&1\end{pmatrix}\right),
\end{equation}
 and let $\epsilon_{it}^k = A_i^k(u_{it}^k - u_{it}^0)$, for $t=1,2$. Let
 $({u}_{i1}^0, u_{i1}^1, u_{i1}^2)$ be independent of $({u}_{i2}^0,
 u_{i2}^1, u_{i2}^2)$. Let the true coefficient parameter $\beta =
 (1,0.5,0)$.  Note that for this test model, only our estimator yields
 consistent point estimates: Pakes and Porter (2013) only consider partial identification,
 and Chamberlain's (1980) conditional logit model requires the errors to be i.i.d. extreme-value distributed.
 
 We compute the bias, standard deviation (SD) and the root mean-squared error (rMSE) of each element of $\widehat{\beta}$ defined in the previous section. The nonparametric conditional choice probabilities are estimated using the $k$-nearest neighbor estimator where the tuning parameter $k$ is selected via leave-one-out cross-validation. We consider four sample sizes $250$, $500$, $1000$, and $2000$, and use $6000$ Monte Carlo repetitions.  The results are reported in Table \ref{tab:mc}. As we can see, the standard deviation decreases with the sample size for every element of the parameter, which is the general pattern for the bias as well. %The rate that the standard deviation decreases is quite close to $1/\sqrt{n}$, while the bias decreases noticeably slower than $1/\sqrt{n}$. \footnote{{\color{blue}We do not yet understand why the bias decreases slower  since we have not been able to derive the asymptotic distribution of the estimator.}}

 \begin{table}[h!]
\begin{center}
\caption{Monte Carlo Results (6000 Repetitions)}\label{tab:mc}
\begin{tabular}{c|ccc|ccc|ccc}
\hline 
\multirow{2}{*}{$n$} & \multicolumn{3}{c|}{$\widehat{\beta}_{1}$} & \multicolumn{3}{c|}{$\widehat{\beta}_{2}$} & \multicolumn{3}{c}{$\widehat{\beta}_{3}$}\tabularnewline
\cline{2-10} 
 & BIAS & SD & rMSE & BIAS & SD & rMSE & BIAS & SD & rMSE\tabularnewline
\hline 
250 & .0158 & .0694 & .0712 & - .0890& .1375 & .1638 & -.0173 & .1384 & .1394\tabularnewline
\hline 
500 & .0199 & .0396 & .0444 & -.0682 & .0918 & .1143 & -.0128 & .1009 & .1017\tabularnewline
\hline 
1000 & .0183& .0288 & .0341 & -.0525 & .0664 &.0846& -.0149 & .0750 & .0764\tabularnewline
\hline 
2000 & .0163 & .0216 & .0270 & -.0412 & .0483 & .0635 & -.0147 & .0527 & .0547\tabularnewline
\hline 
\end{tabular}
\end{center}
\end{table}

\section{Related model: Aggregate Panel Multinomial Choice Model}
Up to this point, we have focused on the setting when the researcher has individual-level panel data on multinomial choice.
In this section, we discuss an important and simpler related model: the panel multinomial choice model estimated using {\em aggregate} data.   Such models are often encountered in empirical industrial organization.\footnote{See, for instance, Berry, Levinsohn, and Pakes (1995) and Berry and Haile (2014).}   In this setting, the researcher observes the aggregated choice probabilities (or {\em market shares}) for the consumer population in a number of regions and across a number of time periods. Correspondingly, the covariates are also only observed at region/time level for each choice option. To be precise, we observe $(\vec{S}_{ct}, \vec{X}_{ct}= (X^{1,'}_{ct},\dots,X^{K,'}_{ct})')_{c=1}^C{}_{t=1}^T$ which denote, respectively, the region/time-level choice probabilities and covariates.
%ere $\vec{X}_{ct} = (X^{0,'}_{ct},\dots,X^{K,'}_{ct})'$ is the region/time-level covariate, and  $\vec{S}_{ct}=(S^0_{ct},\dots,S^K_{ct})'$, $S^k_{ct}$ is the fraction of $n_{ct}$ agents in region $c$ and time $t$ who chose option $k$, i.e.
%\begin{equation}
%\vec{S}_{ct} = n_{ct}^{-1}\sum_{i=1}^{n_{ct}}\vec{Y}_{ict},
%\end{equation}
%where $\vec{Y}_{ict} = (Y_{ict}^0,\dots,Y_{ict}^K)'$ denotes the vector of choice indicators for individual $i$ in region $c$ and time $t$. 
Only a ``short'' panel is required, as our approach works with as few as two periods.

We model the individual choice $\vec{Y}_{ict} =(Y_{ict}^1, \dots,Y_{ict}^K)'$ as
\begin{equation}
Y_{ict}^k = 1\{\beta'X_{ct}^k + A_{c}^k + \epsilon_{ict}^k\geq \beta'X_{ct}^{k'} + A_{c}^{k'} + \epsilon_{ict}^{k'}~ \forall k'=0,\dots,K\},
\end{equation}
where $X_{ct}^0$, $A^0_c$, and $\epsilon_{ict}^0$ are normalized to zero, $\vec{A}_c = (A_c^0,\dots,A_c^K)'$ is the choice-specific regional fixed effect, and $\vec{\epsilon}_{ict} = (\epsilon_{ict}^1,\dots,\epsilon_{ict}^K)'$ is the vector of idiosyncratic shocks. 
(This is the main distinction vis-a-vis the individual-level model discussed previously, where the $A^k_i$ are choice- and {\em individual}-specific fixed effects.)
Correspondingly, the vector of choice probabilities
$\vec{S}_{ct}=(S^1_{ct},\dots,S^K_{ct})'$ is obtained as the fraction of $n_{ct}$ agents in region $c$ and time $t$ who chose option $k$, i.e.
%\begin{equation}
$\vec{S}_{ct} = n_{ct}^{-1}\sum_{i=1}^{n_{ct}}\vec{Y}_{ict}$.
%\end{equation}
We make the following assumptions (which are weaker than the corresponding assumptions for individual-level panel data)

\begin{assumption}\label{ass:error-agg}
\textup{(a)} The error term $\vec{\epsilon}_{ict}$ is independent of $\vec{X}_{ct}$ given $\vec{A}_c$, and

%\textup{(b)} $E[\vec{Y}_{ict}|\vec{X}_{c1},\dots, \vec{X}_{cT}, \vec{A}_c] $ $= E[\vec{Y}_{ict}|\vec{X}_{ct},\vec{A}_c]$ for all $t$, and 

\textup{(b)} the conditional c.d.f. of $\vec{\epsilon}_{ict}$ given  $\vec{A}_c$ is continuous everywhere.
\end{assumption}

Under Assumption \ref{ass:error-agg}, Lemma \ref{lem:gradient} implies that, for any cycle $t_1,t_2,\dots,t_M, t_{M+1}=t_1$ in  $\{1,\dots, T\}$, 
\begin{equation}
\sum_{m=1}^M E(\vec{Y}_{ict_m}^\prime|\vec{X}_{ct_m},\vec{A}_c)(\vec{X}_{c,t_{m}}^\prime\beta-\vec{X}^\prime_{c,t_{m+1}}\beta) \geq 0, ~a.s. \label{aPanelIneq2}
\end{equation}
Unlike in the the individual-level setting, we no longer need to eliminate the fixed effect $\vec{A}_c$ from the inequalities in Eq. (\ref{aPanelIneq2}). This is because in the aggregate setting, even though $\vec{A}_c$ is latent, the conditional choice probability $E(\vec{Y}_{ict}|\vec{X}_{ct},\vec{A}_c)$ can be estimated uniform consistently by $\vec{S}_{ct}$.\footnote{Specifically, we can use $\vec{S}_{ct} = n_{ct}^{-1}\sum_{i=1}^{n_{ct}}\vec{Y}_{ict}$ to estimate $E(\vec{Y}_{ict}|\vec{X}_{ct}, \vec{A}_{c})$. If $\inf_{c,t}n_{ct}$ grows fast enough with $C\times T$, this estimator is uniformly consistent, i.e. $\sup_c\sup_t\|\vec{S}_{ct} - E(\vec{Y}_{ict}|\vec{X}_{ct}, \vec{A}_{c})\|\to_p 0$. 
Section 3.2 of Freyberger's (2013) arguments (using Bernstein's Inequality) imply that the above convergence holds if $\log(C\times T)/\min_{c,t}n_{ct}\to 0$.  
} Because no integrating-out is needed, we also do not need the conditional irrelevance assumption analogous to Assumption \ref{ass:error}(b), which enables us to allow $\vec{X}_{ct}$ to contain lagged values of $\vec{S}_{ct}$.

Using $\vec{S}_{ct}$ as the estimator of $E[\vec{Y}_{ict}|\vec{X}_{ct},\vec{A}_c]$, we can construct a consistent estimator of $\beta$: $\widehat{\beta} = \widetilde{\beta}/\|\widetilde{\beta}\|$, where
\begin{equation}
\widetilde{\beta} =
\arg\min_{b\in R^{d_x}:\max_{j=1,\dots,J}|b_j|=1}Q_n(b),\quad\text{ and}
\end{equation} 
\begin{equation}
Q_n(b)=
\max_{1\leq s<t\leq T}C^{-1}\sum_{c=1}^C \left[(b'\vec{X}_{cs} - b'\vec{X}_{ct})(\vec{S}_{cs}- \vec{S}_{ct})\right]_{-}.
\end{equation}
This estimator is consistent by similar arguments as those for Theorem \ref{thm:Consistency}.

\subsection{Empirical Illustration}
Here we consider an empirical illustration, based on the aggregate panel multinomial choice model described above.  We estimate a discrete
choice demand model for bathroom tissue, using store/week-level scanner
data from different branches of Dominicks supermarket.\footnote{This
  dataset has previously been used in many papers in both economics
  and marketing; see a partial list at {\sffamily http://research.chicagobooth.edu/kilts/marketing-databases/dominicks/papers}.}   The bathroom tissue category is convenient because there are relatively few brands of bathroom tissue, which simplifies the analysis.   The data are collected at the store and week level, and report sales and prices of different brands of bathroom tissue.   For each of 54 Dominicks stores, we aggregate the store-level sales of bathroom tissue up to the largest six brands, lumping the remaining brands into the seventh good (see Table \ref{products}).  
\begin{table}[h!]
   \centering
    \caption{\footnotesize Table of the 7 product-aggregates    used in estimation.              }\label{products}
   \linespread{1}\small
  \begin{tabular}{|c|l|}\hline
&Products included in analysis \\\hline
1 &Charmin\\
2 &White Cloud\\
3 &Dominicks\\
4 &Northern\\
5 &Scott\\
6 &Cottonelle\\
7 &Other good (incl. Angelsoft, Kleenex, Coronet and smaller brands)\\\hline
\end{tabular}
\end{table}

We form moment conditions based on cycles over weeks, for each store.   In the estimation results below, we consider cycles of length 2. %\footnote{We find that using cycles of length up to 3 or 4 gives exactly the same estimates.} 
 Since data are observed at the weekly level, we consider subsamples of 10 weeks or 15 weeks which were drawn at periodic intervals from the 1989-1993 sample period. After the specific weeks are drawn, all length-2 cycles that can be formed from those weeks are used. 

We allow for store/brand level fixed effects and use the techniques developed in Section 3.1 to difference them out. Due to this, any time-invariant brand- or store-level variables will be subsumed into the fixed effect, leaving only explanatory covariates which vary both across stores and time.   As such, we consider a simple specification with $X^k = (\text{PRICE, DEAL, PRICE*DEAL})$.   PRICE is measured in dollars per roll of bathroom tissue, while DEAL is defined as whether a given brand was on sale in a given store-week.\footnote{The variable DEAL takes the binary values $\{0,1\}$ for products 1-6, but takes continuous values between 0 and 1 for product 7. The continuous values for product 7 stand for the average on-sale frequency of all the small brands included in the product-aggregate 7. This and the fact that PRICE is a continuous variable make the point identification condition, Assumption \ref{ass:sufbdm}, plausible.}  Since any price discounts during a sale will be captured in the PRICE variable itself, DEAL captures any additional effects that a sale has on behavior, beyond price.
Summary statistics for these variables are reported in  Table \ref{tab:summary}.

\begin{table}[h!]
\begin{center}
\caption{Summary Statistics}\label{tab:summary}
\begin{tabular}{ccccccc}
& &min& max&mean &median &std.dev \\
\hline
\hline
10 week data&DEAL&0&1&.4350&0&.4749\\
&PRICE&.1776&.6200&.3637&.3541&.0876\\
\hline
15 week data&DEAL&0&1&.4488&0&.4845\\
&PRICE&.1849&.6200&.3650&.3532&.0887\\
\hline\hline
\end{tabular}
\end{center}
\end{table}

The point estimates are reported in Table \ref{demandpointtable}.   One interesting observation from the table is that the sign of the interaction term is negative, indicating that consumers 
are more price sensitive when a product is on sale. This may be consistent with the story that the sale status draws consumers' attention to price (from other characteristics of the product). 
%This may be consistent with a ``bounded-rationality'' view of consumer behavior, whereby consumers may be less aware of a product's exact price (that is, it may become less salient) once they are aware that it is on sale.

\begin{table}[h!]
\begin{center}
\caption{Point Estimates for Bathroom Tissue Choice Model}\label{demandpointtable}
\begin{tabular}{ccccc}
&& 10 week data&15 week data  \\
\hline\hline
$\beta_{1}$ & deal &          .1053     &.0725\\
 \hline 
$\beta_{2}$ & price &         -.9720 &-.9922 \\
 \hline
$\beta_{3}$ & price*deal   & -.2099    &-.1017   \\

\hline\hline
\end{tabular}

%\footnotesize{These estimates were obtained using the normalization $\max_{j=1,2,3}|\beta_j|=1$}
\end{center}
\end{table}

\section{Conclusions}
In this paper we explored how the notion of cyclic monotonicity can be exploited for the identification
and estimation of panel multinomial choice models with fixed effects.   In these models, the
social surplus (expected maximum utility) function  is convex, implying that its gradient, which
corresponds to the choice probabilities, satisfies cyclic
monotonicity.   This is just the appropriate generalization of the
fact that the slope of a single-variate convex function is
non-decreasing. In ongoing work, we are considering the possible extension of these
ideas to other models and economic settings.   

Throughout this paper, we have focused on estimation under the assumption that the conditions for point identification are satisfied.   In the case that these conditions are not satisfied, the parameters will only be partially identfied, and we might consider an alternative inferential approach for this case based on recent work by Freyberger and Horowitz (2013).   Since this approach is quite different in spirit to the methods described so far, we do not discuss it here.

\vspace{3in}

\newpage

\newpage

\appendix
\section{Appendix: Proofs}\label{proofs}

\begin{proof}[Proof of Lemma \textup{\ref{lem:gradient}}] 

(a)  By the independence between $\vec{U}$ and $\vec{\epsilon}$, we have
\begin{equation}
{\cal G}(\vec{u}) = E\{\max_{k}[U^k+\epsilon^k]|\vec{U}=\vec{u}\} = E\{\max_k[u^k+\epsilon^k]\}.
\end{equation}
This function is convex because $\max_k[u^k+\epsilon^k]$ is convex for all values of $\epsilon^k$ and the expectation operator is linear. 

(b,c) Without loss of generality, we focus on the differentiability with respect to $u^K$. Let $(u^1_\ast,\dots,u^K_\ast)$ denote an arbitrary fixed value of $(U^1,\dots, U^K)$, and let $u_\ast^0 =0$. It suffices to show that $\lim_{\eta\to 0} [{\cal G}(u_\ast^1, \allowbreak\dots, u_\ast^{K}+\eta) - {\cal G}(u_\ast^1,\dots, u_\ast^{K})]/\eta$ exists.  We show this using the bounded convergence theorem. First observe that
\begin{align}
\frac{{\cal G}(u_\ast^1,\dots,u_\ast^{K}+\eta) - {\cal G}(u_\ast^1,\dots,u_\ast^K)}{\eta} = E\left[\frac{\Delta(\eta,\vec{u}_\ast,\vec{\epsilon})}{\eta} \right],
\end{align}
where $\Delta(\eta,\vec{u}_\ast,\vec{\epsilon}) = \max\{u_\ast^1+\epsilon^1,\dots,u_\ast^{K}+\eta+\epsilon^K\} -\max\{u_\ast^1+\epsilon^1,\dots,u_\ast^{K}+\epsilon^K\}$. Consider an arbitrary value $\vec{e}$ of $\vec{\epsilon}$ and $e^0=0$.  If $e^K+u_\ast^K>\max_{k=0,\dots,K-1}[u_\ast^k+e^k]$, for $\eta$ close enough to zero, we have
\begin{equation}
\frac{\Delta(\eta,\vec{u}_\ast,\vec{e}) }{\eta}=\frac{ (u_\ast^K+\eta+e^K)-(u_\ast^K+e^K)}{\eta} = 1.
\end{equation}
Thus,
\begin{equation}
\lim_{\eta\to0} \frac{\Delta(\eta,\vec{u}_\ast,\vec{e})}{\eta}=1.
\end{equation}
On the other hand, if $e^K+u_\ast^K<\max_{k=0,\dots,K-1}[u_\ast^k+e^k]$, then for $\eta$ close enough to zero, we have
\begin{equation}
\frac{\Delta(\eta,\vec{u}_\ast,\vec{e})} {\eta}=\frac{ 0}{\eta} = 0.
\end{equation}
Thus,\begin{equation}
\lim_{\eta\to0} \frac{\Delta(\eta,\vec{u}_\ast,\vec{e})}{\eta}=0.
\end{equation}
Because $\vec{\epsilon}$ has a continuous distribution, we have $\Pr(\epsilon^K+u_\ast^K=\max_{k=0,\dots,K-1}[u_\ast^k+\epsilon^k]) = 0$. Therefore, almost surely,
\begin{equation}
\lim_{\eta\to0} \frac{\Delta(\eta,\vec{u}_\ast,\vec{\epsilon})}{\eta} = 1\{\epsilon^K+u_\ast^K>\max_{k=0,\dots,K-1}[u_\ast^k+\epsilon^k]\}.
\end{equation}

Also, observe that
\begin{equation}
\left|\frac{\Delta(\eta,\vec{u}_\ast, \vec{\epsilon})}{\eta}\right| \leq \left|\frac{u_\ast^K+\eta+\epsilon^K  - (u_\ast^K+\epsilon^K)}{\eta} \right|=1<\infty.
\end{equation}
Thus, the bounded convergence theorem applies and yields
\begin{equation}
\lim_{\eta\to 0}E\left[\frac{\Delta(\eta,\vec{u}_\ast,\vec{\epsilon})}{\eta} \right] = E[1\{\epsilon^K+u_\ast^K>\max_{k=0,\dots,K-1}[u_\ast^k+\epsilon^k]\}] = p^K(\vec{u}).
\end{equation}
This shows both part (b) and part (c).

Part (d) is a direct consequence of part (c) and Proposition \ref{prop:Convexity}.
\end{proof}

\bigskip

\begin{proof}[Proof of Theorem {\em\ref{thm:setid}}] 
First note that the true parameter value $\beta$ satisfies $\beta'g\geq 0$ for all $g\in {\cal G}$ by (\ref{panelid2}) and by the definition of ${\cal G}$. Then by the definition of $cc({\cal G})$, we have $\beta'g\geq 0$ for all $g\in cc({\cal G})$. That implies that $cc({\cal G})$ is a subset of the half-space $\{g\in R^{d_x}: \beta'g\geq 0\}$. By the same logic, we have
\begin{equation}
cc({\cal G})\subseteq \cap_{b\in B_0}\{g\in R^{d_x}: b'g\geq 0\}.
\end{equation}
Note that $0$ is on the boundary of all the half-spaces $\{g\in R^{d_x}: b'g\geq 0\}$, which implies that   the intersection of these half-spaces is a half-space if and only if they are all the same half-space, or equivalently if and only if $B_0$ is a singleton. This proves the theorem. 
\end{proof}
\bigskip

\begin{proof}[Proof of Theorem \textup{\ref{thm:sufbdm}}] Consider the set 
\begin{equation}
\tilde{G} = \{g\in G: \beta'g>0\}.
\end{equation}
Next, we show that (i) $cone({\tilde G})$ is dense in the set $\{g\in R^{d_x}:\beta'g> 0\}$ and (ii) $cone(\tilde{G}) \subseteq cone({\cal G})$. Both (i) and (ii) together immediately implies that $ \{g\in R^{d_x}:\beta'g\geq 0\}\subseteq closure(cone({\cal G}))\subseteq cc({\cal G})$. By (\ref{panelidm}), $cc({\cal G})\subseteq (\{g\in R^{d_x}:\beta'g\geq 0\})$. Thus  $cc({\cal G}) = \{g\in R^{d_x}:\beta'g\geq 0\}$.

To show (i), consider an arbitrary point $g_0\in R^{d_x}$ such that $\beta'g_0>0$. Then by Assumption \ref{ass:sufbdm}, there exists sequences $\{\lambda_m\in [0,\infty)\}$ and $\{g_m\in G\}$ such that $\lim_{m\to\infty} \lambda_m g_m = g_0$. Because $\beta'g_0>0$, there must exists an $M>0$ such that for all $m>M$, $\lambda_m\beta'g_m>0$. For these $m$'s, we must have $\beta'g_m>0$. That is, $g_m\in \tilde{G}$. Therefore, $g_0$ can be approximately arbitrarily closely by points in $cone(\tilde{G})$, which shows result (i). 

To show (ii), consider an arbitrary point $\tilde{g}\in cone({\tilde G})$. Then there exists $\lambda\geq 0$ and $g\in {G}$ such that $\beta'g>0$ and $\tilde{g} = \lambda g$. By the definition of $G$, there exist $s, t \in\{1,\dots,T\}$, $k \in\{ 1,\dots,K\}$  and $\vec{x}^{-k} \in \vec{\cal X}_{-k}$ such that $g\in G_{s,t}^{k}(\vec{x}^{-k})$. Then there exists $x^{k}_\ast$ and $x^{k}_{\dagger}$ such that $x^{k}_{\ast} - x^{k}_{\dagger}= g$ and $(x^{k}_{\ast}, x^{k}_{\dagger})$ is in the conditional support of $(X^k_{it},X^k_{is})$ given $\vec{X}^{-k}_{it} = \vec{X}_{is}^{-k} = \vec{x}^{-k}$. By the definition of ${\cal G}$, the following element belongs to ${\cal G}$:
\begin{equation}
E[Y_{it}^k - Y_{is}^k|{X}_{it}^k = x_\ast^k, \vec{X}_{it}^{-k} =\vec{X}_{is}^{-k} = \vec{x}^{-k}, X_{is}^k = x_\dagger^k]g
\end{equation}
Below we show that
\begin{align}
E[Y_{it}^k - Y_{is}^k|{X}_{it}^k = x_\ast^k, \vec{X}_{it}^{-k} =\vec{X}_{is}^{-k} = \vec{x}^{-k}, X_{is}^k = x_\dagger^k]>0. \label{prob}
\end{align}
This implies that $g\in cone({\cal G})$. Thus, $\tilde{g}\in cone({\cal G})$, which shows result (ii). 

The result in (\ref{prob}) follows from the derivation:
\begin{align}
&E[Y_{it}^k|{X}_{it}^k = x_\ast^k, \vec{X}_{it}^{-k} =\vec{X}_{is}^{-k} = \vec{x}^{-k}, X_{is}^k = x_\dagger^k]\nonumber\\
& = E\{E[Y_{it}^k|{X}_{it}^k = x_\ast^k, \vec{X}_{it}^{-k} =\vec{X}_{is}^{-k} =\vec{x}^{-k}, \vec{A}_i]\}\nonumber\\
& = E\{E[Y_{it}^k|{X}_{it}^k = x_\ast^k, \vec{X}_{it}^{-k} = \vec{x}^{-k}, \vec{A}_i]\}\nonumber\\
& =E\left[ \Pr\left(\beta'x_\ast^k+A_i^k+\epsilon^k_{it}\geq \max_{k' = 0, 1,\dots, k-1, k+1,\dots,K}\beta'x^{k'}+A_i^{k'}+\epsilon^{k'}_{it}|{X}_{it}^k = x_\ast^k, \vec{X}_{it}^{-k} = \vec{x}^{-k}, \vec{A}_i\right)\right]\nonumber\\
&=E\left[ \Pr\left(\beta'x_\ast^k+A_i^k+\epsilon^k_{it}\geq \max_{k' = 0, 1,\dots, k-1, k+1,\dots,K}\beta'x^{k'}+A_i^{k'}+\epsilon^{k'}_{it}| \vec{A}_i\right)\right]\nonumber\\
&>E\left[ \Pr\left(\beta'x_\dagger^k+A_i^k+\epsilon^k_{it}\geq \max_{k' = 0, 1,\dots, k-1, k+1,\dots,K}\beta'x^{k'}+A_i^{k'}+\epsilon^{k'}_{it}| \vec{A}_i\right)\right]\nonumber\\
&=E\left[ \Pr\left(\beta'x_\dagger^k+A_i^k+\epsilon^k_{is}\geq \max_{k' = 0, 1,\dots, k-1, k+1,\dots,K}\beta'x^{k'}+A_i^{k'}+\epsilon^{k'}_{is}| \vec{A}_i\right)\right]\nonumber\\
&=E[Y_{is}^k|{X}_{it}^k = x_\ast^k, \vec{X}_{it}^{-k} =\vec{X}_{is}^{-k} = \vec{x}^{-k}, X_{is}^k = x_\dagger^k],
\end{align}
where the first equality holds by the law of iterated expectations, the second equality holds by Assumption \ref{ass:error}(b), the third equality holds by the specification of the multinomial choice model, the fourth equality holds by Assumption \ref{ass:error}(a), the inequality holds by Assumption \ref{ass:error2m} and $\beta'(x_\ast^k - x_\dagger^k)>0$, the fifth equality holds by stationarity, and the last equality follows by analogous arguments as those preceding the inequality.\end{proof}

\bigskip
\begin{proof}[Proof of Theorem \textup{\ref{thm:suflgm}}] Consider the set
\begin{align}
\tilde{G} = \{g\in G: \beta'g>0\}.
\end{align}
It has been shown in the proof of Theorem \ref{thm:sufbdm} that $cone(\tilde{G})\subseteq cone({\cal G})$ under Assumptions \ref{ass:error}(a)-(b) and \ref{ass:error2m}. That implies $cc(\tilde{G})\subseteq cc({\cal G})$. Below we show that $\{g\in R^{d_x}:\beta'g\geq 0\}\subseteq cc(\tilde{G})$. This implies $\{g\in R^{d_x}:\beta'g\geq 0\}\subseteq cc({\cal G})$. By the definition of ${\cal G}$ and by (\ref{panelidm}), $cc({\cal G})\subseteq \{g\in R^{d_x}:\beta'g\geq 0\}$. Therefore, $cc({\cal G})= \{g\in R^{d_x}:\beta'g\geq 0\}$, which proves the theorem.

Now we show $\{g\in R^{d_x}:\beta'g\geq 0\}\subseteq cc(\tilde{G})$. Suppose that $\beta_j>(<)0$. First, by Assumption \ref{ass:suflgm}(a), we have that
\begin{align}
{\tilde G} = \{g\in R^{d_x}:  g_{-j}\in G_{-j}, g_j>(<) - \beta_{-j}'g_{-j}/\beta_j\},\label{Gtilde}
\end{align}
where $\beta_{-j} = (\beta_1,\dots,\beta_{j-1},\beta_{j+1},\dots,\beta_{d_x})'$. Consider an arbitrary point $g_0\in \{g\in R^{d_x}:\beta'g\geq 0\}$. Then, $g_{0,j}>(<)-g_{0,-j}'\beta_{-j}/\beta_j$. Let  \begin{equation}
d= g_{0,j}+g_{0,-j}'\beta_{-j}/\beta_j.
\end{equation}
Then, $d>(<) 0$. 

By Assumption \ref{ass:suflgm}(b), $G_{-j}$ spans $R^{d_x-1}$. By definition, $G$ is symmetric about the origin, which implies that $G_{-j}$ is also symmetric about the origin. Thus, $G_{-j}$ spans $R^{d_x-1}$ with nonnegative weights. Thus, there exists an integer $M$, weights $c_1, \dots, c_M>0$, and $g_{1,-j} ,\dots, g_{M,-j}\in G_{-j}$ such that $g_{0,-j} = \sum_{m=1}^Mc_m g_{m,-j}$. 
Let $g_{m,j} = \left(d/\sum_{m=1}^Mc_m\right)-\left(g_{m,-j}'\beta_{-j}/\beta_j\right)$ for $m=1,\dots,M$. Let $g_m$ be the vector whose $j$th element is $g_{m,j}$ and the rest of whose elements form $g_{m,-j}$, for $m=1,\dots, M$. Then $g_m\in {\tilde G}$ for $m=1,\dots,M$ (according to Eqn. (\ref{Gtilde})), and $g_0 = \sum_{m=1}^M c_m g_m$. Thus, $g_0\in cc(\tilde G)$. Therefore, $ \{g\in R^{d_x}: \beta'g\geq 0\}\subseteq cc({\tilde G})$.
\end{proof}

\bigskip

\begin{proof}[Proof of Theorem \textup{\ref{thm:Consistency}}] For any $b\in R^{d_x}$, let $\|b\|_\infty = \max_{j=1,\dots,J}|b_j|$. Below we show that \begin{equation}
\widetilde{\beta}\to_p \beta/\|\beta\|_\infty .\label{maxnorm}
\end{equation}
   This implies that $\widehat{\beta}\to_p \beta$ because $\widehat{\beta} = \widetilde{\beta}/\|\widetilde{\beta}\|$ and the mapping $f:\{b\in R^{d_x}:\|b\|_\infty=1\}\to \{b\in R^{d_x}:\|b\|=1\}$ such that $f(b) = b/\|b\|$ is continuous.

Now we show Eqn. (\ref{maxnorm}). Let
\begin{equation}
Q(b) = \max_{1\leq s<t\leq T}E\left[b'(\vec{X}_{is}-\vec{X}_{it})\left(\vec{p}_{s|s,t}(\vec{X}_{is},\vec{X}_{it}) -\vec{p}_{t|s,t}(\vec{X}_{is},\vec{X}_{it})\right) \right]_{-}.
\end{equation}
Under Assumption \ref{ass:error}, the identifying inequalities (\ref{panelidm}) hold, which implies that \begin{equation}
Q(\beta) = Q( \beta/\|\beta\|_\infty) )= 0.\end{equation}
 Assumption \ref{ass:consistency} and Theorem \ref{thm:setid} together imply that, for any $b\neq \beta/\|\beta\|_\infty$ such that $\|b\|_\infty=1$, 
\begin{align}
\Pr\left(b'(\vec{X}_{is}-\vec{X}_{it})\left(\vec{p}_{s|s,t}(\vec{X}_{is},\vec{X}_{it}) -\vec{p}_{t|s,t}(\vec{X}_{is},\vec{X}_{it})\right)<0\right)>0. 
\end{align}
Thus, for any $b\neq \beta/\|\beta\|_\infty$ such that $\|b\|_\infty=1$, we have that $Q(b)>0$. This, the continuity of $Q(b)$, and the compactness of the parameter space $\{b\in R^{d_x}:\|b\|_\infty=1\}$ together imply that, for any $\varepsilon>0$, there exists a $\delta>0$ such that,
\begin{equation}
\inf_{b\in R^{d_x}:\|b\|_\infty=1, \|b-\beta\|>\varepsilon}Q(b) \geq \delta. 
\end{equation}
If in addition, we can show the uniform convergence of $Q_n(b)$ to $Q(b)$, then the consistency of $\widehat{\beta}$ follows from standard consistency arguments (see, e.g., Newey and McFadden (1994)). 

Now we show the uniform convergence of $Q_n(b)$ to $Q(b)$. That is, we show that
\begin{equation}
\sup_{b\in R^{d_x}:\|b\|_\infty=1}|Q(b) - Q_n(b)|\to_p 0.\label{Uconv}
\end{equation}
First, we show the stochastic equicontinuity of $Q_n(b)$. For any $b,b^\ast\in R^{d_x}$ such that $\|b\|_\infty=\|b_j^\ast\|_\infty=1$, consider the following derivation:
\begin{align}
&|Q_n(b)-Q_n(b^\ast)|\nonumber\\& \leq \max_{1\leq s<t \leq T} n^{-1}\sum_{i=1}^n\left|(b-b^\ast)'(\vec{X}_{is} -\vec{ X}_{it})\left(\hat{\vec{p}}_{s|s,t}(\vec{X}_{is},\vec{X}_{it})-\hat{\vec{p}}_{t|s,t}(\vec{X}_{is},\vec{X}_{it})\right)\right|\nonumber\\
&\leq \max_{1\leq s<t \leq T} n^{-1}\sum_{i=1}^n\|b-b^\ast\|\|(\vec{X}_{is} -\vec{ X}_{it})\left(\hat{\vec{p}}_{s|s,t}(\vec{X}_{is},\vec{X}_{it})-\hat{\vec{p}}_{t|s,t}(\vec{X}_{is},\vec{X}_{it})\right)\|\nonumber\\
&\leq 2\max_{1\leq s<t \leq T} n^{-1}\sum_{i=1}^n\|\vec{X}_{is} -\vec{ X}_{it}\|\|b-b^\ast\|.\label{lip}
\end{align}
Therefore, for any fixed $\varepsilon>0$, we have
\begin{align}
&\lim_{\delta\downarrow 0}\underset{n\to\infty}{\lim\sup}\Pr\left(\sup_{b,b^\ast\in R^{d_x}, \|b\|_\infty=\|b^\ast\|_\infty=1,\|b-b^\ast\|\leq \delta}|Q_n(b)-Q_n(b^\ast)|>\varepsilon\right) \nonumber\\
&\leq\lim_{\delta\downarrow 0} \underset{n\to\infty}{\lim\sup}\Pr\left(2\delta\max_{1\leq s<t \leq T} n^{-1}\sum_{i=1}^n\|\vec{X}_{is} -\vec{ X}_{it}\|>\varepsilon\right)\nonumber\\
&\leq \lim_{\delta\downarrow 0}\underset{n\to\infty}{\lim\sup}\Pr\left(2\max_{1\leq s<t \leq T} n^{-1}\sum_{i=1}^n\|\vec{X}_{is} -\vec{ X}_{it}\|>\varepsilon/\delta\right)
\nonumber\\
&=0,
\end{align}
where the first inequality holds by (\ref{lip}) and the equality holds by Assumption \ref{ass:consistency}(c). This shows the stochastic equicontinuity of $Q_n(b)$. 

Given the stochastic equicontinuity $Q_n(b)$ and the compactness of $\{b\in R^{d_x}:\|b\|_\infty=1\}$, to show (\ref{Uconv}), it suffices to show that  for all $b\in R^{d_x}:\|b\|_\infty=1$, we have
\begin{equation}
Q_n(b)\to_p Q(b).
\end{equation}
For this purpose, let
\begin{equation}
\widetilde{Q}_n(b)=
\max_{1\leq s<t\leq T}n^{-1}\sum_{i=1}^n \left[(b'\vec{X}_{is} - b'\vec{X}_{it})({\vec{p}}_{s|s,t}(\vec{X}_{is}, \vec{X}_{it}) - {\vec{p}}_{t|s,t}(\vec{X}_{is}, \vec{X}_{it}))\right]_{-}.
\end{equation}
By Assumption \ref{ass:consistency}(c) and the law of large numbers, we have $\widetilde{Q}_n(b)\to_p Q(b)$. Now we only need to show that $|\widetilde{Q}_n(b)-Q_n(b)|\to_p 0$. But that follows from the derivation:
\begin{align}
&|\widetilde{Q}_n(b) - Q_n(b)|\nonumber\\
&\leq\max_{1\leq s<t\leq T}n^{-1}\sum_{i=1}^n \sum_{j=s,t}\left|(b'\vec{X}_{is} - b'\vec{X}_{it})(\hat{\vec{p}}_{j|s,t}(\vec{X}_{is}, \vec{X}_{it}) -{\vec{p}}_{j|s,t}(\vec{X}_{is}, \vec{X}_{it}) )\right|,\nonumber\\
&\leq2\sup_{\vec{x}_s,\vec{x}_t\in\vec{\cal X}}\sup_{j=s,t}\|\hat{\vec{p}}_{j|s,t}(\vec{x}_s,\vec{x}_t)- \vec{p}_{j|s,t}(\vec{x}_s,\vec{x}_t)\| \max_{1\leq s<t\leq T}n^{-1}\sum_{i=1}^n \|b'\vec{X}_{is} - b'\vec{X}_{it}\|,\nonumber\\
&\to_p 0,
\end{align}
where the convergence holds by Assumptions \ref{ass:consistency}(b)-(c). Therefore the theorem is proved. \end{proof}

\section{Appendix: Primitive necessary condition for point identification}\label{primnec}
In this section we characterize a primitive necessary condition for point identification, in the special case of a binary choice model.\footnote{We were not able to obtain an analogous result in the more general multinomial choice case because (i) cycles longer than 2 would need to be considered, and the simultaneous variation of $X_{it}^k$ for all $k$ would also need to be taken into account. }

In the binary choice case, $p(v,a)$ for each $a$ is a mapping from $R$ to $R$. For such mappings, the cyclic monotonicity is equivalent to monotonicity and it is without loss to consider only cycles of length 2.  Moreover, because $K=1$, there is no need for the $\vec{\cdot}$ sign on $X_{it}$, $\epsilon_{it}$, $A_i$, $v$, $a$, and $p(\cdot,\cdot)$. Similarly, there is also no need for the choice index superscript on these symbols. Thus, we omit them in this section. 

The set $G$ defined in (\ref{G}) simplifies to: 
\begin{equation}
G\equiv \cup_{s,t=1,\dots,T}G_{s,t},
\end{equation}
where $G_{s,t}$ is the support of $X_{it}-X_{is}$ for $s,t=1,\dots, T$. Theorem \ref{thm:nec} below is the main result of this section. It shows that, if one regressor has finite support and all other regressors have bounded support, then point identification cannot be achieved at all values of $\beta$.
\begin{assumption}\label{ass:nec} 
For some $j=1,\dots, d_x$,
\textup{(a)} $G_j$ is a finite set, and

 \textup{(b)} $G_{j'}$ is a bounded set for al $j'\neq j$. 
\end{assumption}
\begin{theorem}[{\bfseries Necessary conditions for point identification}]\label{thm:nec} Under Assumptions \textup{\ref{ass:error}(a)-(b)} and \textup{\ref{ass:error2m}}, if Assumption \textup{\ref{ass:nec}}  holds, then $cc({\cal G})$ is not always a half-space.
\end{theorem}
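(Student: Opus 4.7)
\textbf{Proof plan for Theorem \ref{thm:nec}.}

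The strategy is to exhibit a parameter $\beta$ and a data-generating process satisfying Assumptions \ref{ass:error}(a)-(b), \ref{ass:error2m}, and \ref{ass:nec} for which $cc({\cal G})$ is a proper subset of the half-space $\{h \in R^{d_x} : \beta'h \geq 0\}$. The first step is a reduction available only in the binary case: because the conditional choice probability $p(\cdot,a)$ is univariate here, cyclic monotonicity reduces to ordinary monotonicity, so every length-$M$ cycle inequality is implied by the corresponding length-$2$ inequalities on the same points. Consequently $cc({\cal G})$ and $cc({\cal G}_2)$ have the same dual cone, and by the bipolar theorem they coincide. Next, using strict monotonicity of $p(\cdot,a)$ under Assumption \ref{ass:error2m} and integrating over the conditional distribution of $A_i$, one checks that $E[Y_{is}-Y_{it}\,|\,x_s,x_t]$ has the same sign as $\beta'(x_s-x_t)$. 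Therefore each element of ${\cal G}_2$ is a positive scalar multiple of some $d \in D^+ := \{d \in G : \beta'd > 0\}$, and conversely each $d \in D^+$ is proportional to an element of ${\cal G}_2$, so $cc({\cal G}) = cc(D^+)$.

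Having reduced matters to $D^+$, take $\beta$ proportional to $e_j$, the $j$-th standard basis vector. The degenerate case $G_j = \{0\}$ is immediate, since then every element of ${\cal G}_2$ has vanishing $j$-th coordinate and $cc({\cal G})$ lies inside a hyperplane. Otherwise $G_j$ is finite and symmetric about zero (being the support of differences), so its positive elements admit a minimum $m_j := \min\{v \in G_j : v > 0\} > 0$, while Assumption \ref{ass:nec}(b) gives $|d_{j'}| \leq M_{j'} := \sup |G_{j'}| < \infty$ for every $j' \neq j$. A concrete DGP realizing such a configuration---for example $T = 2$, $d_x = 2$, $X_{it,j}$ i.i.d.\ Bernoulli, $X_{it,j'}$ i.i.d.\ uniform on $[-1/2,1/2]$, $A_i \equiv 0$, and $\epsilon_{it}$ i.i.d.\ standard normal---is easily verified to satisfy all the maintained assumptions.

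For any $h \in cc(D^+)$, write $h = \lim_n h_n$ with $h_n = \sum_m \lambda_{n,m} d^{(n,m)}$, $\lambda_{n,m} \geq 0$, $d^{(n,m)} \in D^+$. The bounds above imply $(h_n)_j \geq m_j \sum_m \lambda_{n,m}$ and $|(h_n)_{j'}| \leq M_{j'} \sum_m \lambda_{n,m}$, hence $|(h_n)_{j'}| \leq (M_{j'}/m_j)(h_n)_j$. Passing to the limit, $|h_{j'}| \leq (M_{j'}/m_j) h_j$ for all $j' \neq j$. In particular the only element of $cc(D^+)$ with $h_j = 0$ is the origin, whereas the half-space $\{h : h_j \geq 0\}$ contains the entire hyperplane $\{h_j = 0\}$ (for example $e_{j'}$ lies in the half-space but not in $cc(D^+)$). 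Hence $cc(D^+) \subsetneq \{h : h_j \geq 0\}$ and $cc({\cal G})$ is not a half-space.

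The main difficulty is the length-$2$ reduction in the first step: it is the univariate structure of the binary choice model that makes length-$M$ cycle inequalities redundant, and this makes precise the authors' footnote on why an analogous necessity argument does not extend to the general multinomial case. Once the reduction is in place, the rest is a direct dimensional bound exploiting boundedness of $G$ in all directions other than $j$ together with finiteness of $G_j$.
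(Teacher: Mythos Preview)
Your argument is correct and runs along the same rails as the paper's: invoke the binary-case reduction to length-$2$ cycles, identify $cc({\cal G})$ with $cc(\tilde G)$ where $\tilde G=D^+=\{g\in G:\beta'g>0\}$, and then exhibit a $\beta$ at which $cc(\tilde G)$ falls strictly inside the half-space $\{g:\beta'g\ge 0\}$. The difference is in this last step. You take $\beta=e_j$, so $D^+=\{d\in G:d_j>0\}$, and combine the positive gap $m_j=\min(G_j\cap(0,\infty))$ with the bounds $M_{j'}=\sup|G_{j'}|$ to get the ratio inequality $|h_{j'}|\le (M_{j'}/m_j)h_j$ on $cc(D^+)$; this forces $cc(D^+)\cap\{h_j=0\}=\{0\}$, which rules out the half-space. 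The paper instead picks $\beta$ with \emph{two} positive coordinates $\beta_j,\beta_{j^\ast}$ and sets the ratio $\beta_{j^\ast}/\beta_j$ small enough that $g_j<0$ forces $\beta'g<0$; then $\tilde G\subseteq\{g:g_j\ge 0\}$, which is strictly smaller than the now-tilted half-space $\{g:\beta'g\ge 0\}$. Your construction gives a sharper quantitative description of the cone and avoids juggling an extra coordinate; the paper's avoids putting $\beta$ on a coordinate axis. One small point to tighten: your bipolar step needs the set-level inclusion ${\cal G}_M\subseteq cc({\cal G}_2)$, not just that the true $\beta$ satisfies length-$M$ inequalities whenever it satisfies length-$2$ ones. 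This follows because each $g\in{\cal G}_M$ can be written (using $p_m=\phi(\beta'x_m)$ for a common increasing $\phi$ under the fixed conditioning) as a nonnegative combination of differences $x_a-x_b$ with $\beta'(x_a-x_b)>0$, each of which is a positive scalar multiple of an element of ${\cal G}_2$; the paper is equally terse here.
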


\noindent\textbf{Remark.} According to the Theorem \ref{thm:nec}, if one coordinate of $X_{is}-X_{it}$ has finite support for all $s,t$, then another coordinate of it must have unbounded support for some pair $(s,t)$. The variable $X_{j,is}-X_{j,it}$ may have finite support, either when $X_{j,it}$ has finite support, or when the change of $X_{j,it}$ across time periods is restricted to a few grids. When that is the case, point identification requires that another regressor, say, $X_{j',it}$ to changes unboundedly as $t$ changes. 

Theorem \ref{thm:nec} does not imply that $cc({\cal G})$ can never be a half-space. There can be $\beta$ values such that, when the population is generated from the model specified in (\ref{eq:panelspec}) and (\ref{eq:panelchoice}) with $\beta$ being that value,  $cc({\cal G})$ is a half-space. In other words, under the conditions of the theorem, point identification may be achieved in part of the parameter space, but not on the whole space of $\beta$. \hfill$\blacksquare$

\begin{proof}[Proof of Theorem \textup{\ref{thm:nec}}] It suffices to find at least one $\beta$ value that generates a population for which $cc({\cal G})$ is not a half-space. Below we find such a value among $\beta$'s that satisfy $\beta_j>0$, $\beta_{j^\ast}>0$ for some $j^\ast\neq j$, and $\beta_{j'} = 0$ for $j'\neq j, j^\ast$. It is useful to note that $G$ is symmetric about the origin by definition. So are $G_{j'}$'s for all $j'=1,\dots, d_x$. 
%Such symmetry also implies that
%\begin{align} \tilde{\cal G} \equiv\{sign(\beta'g)g\in R^{d_x}:g\in G\}= \{g\in G: \beta'g>0\}\cup\{\mathbf{0}_{d_x}\}.\label{symm}
%\end{align}

We discuss two cases. In the first case, $G_{j}\cap (-\infty, 0) = \emptyset$.  Then $G_{j} = \{0\}$ because it is symmetric about the origin. Then ${\cal G}$ is contained in the subspace $\{g\in R^{d_x}:g_{j}=0\}$. By the definition of $cc(\cdot)$, $cc({\cal G})$ must also be contained in $\{g\in R^{d_x}:g_{j}=0\}$, and thus cannot be a half-space of $R^{d_x}$. 

In the second case,  $G_{j}\cap (-\infty, 0)\neq \emptyset$. Assumption \ref{ass:nec}(a) implies that it is a finite set. Then $\eta \equiv \max (G_{j}\cap (-\infty, 0))$ is well defined and $\eta<0$. Assumption \ref{ass:nec}(b) implies that there is a positive constant $C$ such that $G_{j^\ast}\subseteq [-C,C]$. Let $\beta$ further satisfy $\beta_{j^\ast}/\beta_{j} < -\eta/C$. Then, for all $g\in G$ such that $g_{j}<0$, we have
\begin{align}
\beta'g = \beta_jg_j + \beta_{j^\ast}g_{j^\ast}\leq \beta_j\eta+\beta_{j^\ast}C<0. \label{betag}
\end{align}
Consider  $\tilde{G} = \{g\in G:\beta'g>0\}$. Then (\ref{betag}) implies that for all  $g\in G$ such that $g_{j}<0$, we have $g\notin\tilde{G}$. That implies that $cc(\tilde{ G})$ contains no point whose $j$th element is negative. Thus $cc(\tilde{G})$ is a proper subspace of $\{g\in R^{d_x}:\beta'g\geq 0\}$. The proof of Theorem \ref{thm:sufbdm} shows that $cone(\tilde{ G}) = cone({\cal G})$ under Assumptions \ref{ass:error}(a)-(b) and \ref{ass:error2m}, which implies that $cc(\tilde{G}) = cc({\cal G})$. Thus, $cc({\cal G})$ is a also a proper subset of $\{g\in R^{d_x}:\beta'g\geq 0\}$ and cannot be a half-space.
\end{proof}
\medskip

\section{Appendix: Relaxing the Independence Assumption Using Control Variables}\label{sec:control}
In this section, we use control variables to relax the conditional independence assumption -- Assumption \ref{ass:error}(a). Similar uses of control variables are common in regression analysis; see, for example, Chapter 7.5 of Stock and Watson (2010), and in the treatment effect
literature; see for example Rosenbaum and Rubin (1983) and Imbens (2004). 

We discuss two cases depending on whether the controls are time variant.

\subsection{Time-Invariant Controls.} 
Let $\eta_i$ be a vector of individual characteristics. Suppose that instead of Assumption \ref{ass:error}, we have Assumption \ref{ass:errorcontrol1}:
\begin{assumption}\label{ass:errorcontrol1} \textup{(a)} The error term $\vec{\epsilon}_{it}$ is independent of $\vec{X}_{it}$ given $\vec{A}_i$ and $\eta_i$,

\textup{(b)} $E[\vec{Y}_{it}|\vec{X}_{i1},\dots,\vec{X}_{iT},\vec{A}_i,\eta_i] = E[\vec{Y}_{it}|\vec{X}_{it},\vec{A}_i,\eta_i]$, for all $t$, and

\textup{(c)} the conditional c.d.f. of $\vec{\epsilon}_{it} $ given $\vec{A}_i$ and $\eta_i$ is continuous everywhere.
\end{assumption}
In addition, suppose that instead of Assumption \ref{ass:error2m}, we have Assumption \ref{ass:error2control1}:
\begin{assumption}\label{ass:error2control1} For every $k=1,\dots,K$, with positive probability, the conditional support of $\vec{\epsilon}_{it}$ given $\vec{A}_i$ and $\eta_i$ is $R^K$. 
\end{assumption}
Then all the results in the previous sections hold because the proofs of those results go through with Assumptions \ref{ass:error} and \ref{ass:error2m} replaced by Assumptions \ref{ass:errorcontrol1} and \ref{ass:error2control1}, respectively. Note that, $\eta_i$ does not need to be observable because it is integrated out, just as $\vec{A}_i$ is, when forming the identifying inequalities. By the same logic, the dimension of $\eta_i$ also does not affect the results.

\subsection{Time-Variant Controls.} 
Time-variant controls cannot be integrated out the way that $\vec{A}_i$ and the time-invariant controls are. We thus require these controls to be observable, and these variables will enter the identifying inequalities. Let $Z_{it}$ be  a vector of control variables. 
Suppose that instead of Assumption \ref{ass:error}, we have Assumption \ref{ass:errorcontrol2}:
\begin{assumption}\label{ass:errorcontrol2} \textup{(a)} The error term $\vec{\epsilon}_{it}$ is independent of $\vec{X}_{it}$ given $\vec{A}_i$ and $Z_{it}$,

\textup{(b)} $E[\vec{Y}_{it}|\vec{X}_{i1},\dots,\vec{X}_{iT},Z_{i1},\dots,Z_{iT},\vec{A}_i,\eta_i] = E[\vec{Y}_{it}|\vec{X}_{it},Z_{it},\vec{A}_i,\eta_i]$, for all $t$, and

\textup{(c)} the conditional c.d.f. of $\vec{\epsilon}_{it} $ given $\vec{A}_i$ and $Z_{it}$ is continuous everywhere.
\end{assumption}
Then, instead of Lemma \ref{lem:idineq}, we have Lemma \ref{lem:idineqcontrol}, the proof of which is given at the end of this section: 
\begin{lemma}\label{lem:idineqcontrol} Under Assumption \textup{\ref{ass:errorcontrol2}}, we have, for any positive integer $M\leq T$ and any cycle $t_1,t_2,\dots,t_M,$ $t_{M+1}=t_1$ in  $\{1,\dots, T\}$,
\begin{align}
\sum_{m=1}^ME[\vec{Y}_{it_m}'|\vec{X}_{it_1},\dots,\vec{X}_{it_M},Z_{it_1} =\dots=Z_{it_M}](\vec{X}_{it_{m}}'\beta - \vec{X}_{it_{m+1}}'\beta)\geq 0 \text{ almost surely}. \label{panelidcontrol}
\end{align}
\end{lemma}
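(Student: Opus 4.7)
The plan is to follow the same template as the proof of Lemma \ref{lem:idineq}, treating the pair $(\vec{A}_i, Z_{it})$ as a generalized ``fixed effect'' for the purpose of invoking Lemma \ref{lem:gradient}, with the important wrinkle that $Z_{it}$ varies across $t$, so we must restrict attention to the event where the $Z$'s are constant along the cycle.

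First, I would apply Lemma \ref{lem:gradient} conditional on $(\vec{A}_i, Z_{it})$. Assumption \ref{ass:errorcontrol2}(a) and (c) deliver conditional independence and continuity, so the conditional choice probability
\begin{equation*}
\vec{p}(\vec{v}, \vec{a}, z) \;=\; E[\vec{Y}_{it} \mid \vec{X}_{it}'\beta = \vec{v},\, \vec{A}_i = \vec{a},\, Z_{it} = z]
\end{equation*}
is cyclic monotone in $\vec{v}$ for every fixed $(\vec{a}, z)$. Writing out the cyclic monotonicity inequality for a cycle of points $\beta'\vec{x}_1, \ldots, \beta'\vec{x}_M$ gives, for every $(\vec{a}, z)$,
\begin{equation*}
\sum_{m=1}^M \vec{p}(\beta'\vec{x}_m, \vec{a}, z)'(\vec{x}_m'\beta - \vec{x}_{m+1}'\beta) \;\geq\; 0.
\end{equation*}

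Next, to turn this into a statement about observables, I would use Assumption \ref{ass:errorcontrol2}(b) on the event $\{Z_{it_1}=\cdots=Z_{it_M}\}$. On this event, holding the common value at $z$, part (b) yields
\begin{equation*}
E[\vec{Y}_{it_m}\mid \vec{X}_{it_1},\ldots,\vec{X}_{it_M}, Z_{it_1}=\cdots=Z_{it_M}=z, \vec{A}_i] \;=\; \vec{p}(\beta'\vec{X}_{it_m}, \vec{A}_i, z),
\end{equation*}
so substituting this into the pointwise cyclic monotonicity inequality (evaluated at the random arguments $\vec{x}_m = \vec{X}_{it_m}$, $\vec{a} = \vec{A}_i$, and the common $z$) gives
\begin{equation*}
\sum_{m=1}^M E[\vec{Y}_{it_m}' \mid \vec{X}_{it_1},\ldots,\vec{X}_{it_M}, Z_{it_1}=\cdots=Z_{it_M}, \vec{A}_i]\,(\vec{X}_{it_m}'\beta - \vec{X}_{it_{m+1}}'\beta) \;\geq\; 0 \text{ a.s.}
\end{equation*}

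Finally, I would integrate out $\vec{A}_i$ by taking the conditional expectation of both sides given $(\vec{X}_{it_1},\ldots,\vec{X}_{it_M}, Z_{it_1}=\cdots=Z_{it_M})$; by the tower property this yields the desired inequality in (\ref{panelidcontrol}). The main conceptual point to flag, rather than a calculational obstacle, is why the conditioning event $\{Z_{it_1}=\cdots=Z_{it_M}\}$ is indispensable: Lemma \ref{lem:gradient} produces a convex potential ${\cal G}(\cdot; \vec{a}, z)$ whose gradient is cyclic monotone in its first argument only when $(\vec{a}, z)$ is held fixed; if the $Z_{it_m}$ were allowed to differ across $m$, one would be summing gradients of different functions around a cycle, and the key convexity-based inequality would no longer apply. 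Conditioning on the common-$Z$ event is the minimal restriction needed to keep the cyclic monotonicity argument intact while still allowing $\vec{A}_i$ to be differenced out via iterated expectations.
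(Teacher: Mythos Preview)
Your proposal is correct and follows essentially the same route as the paper's proof: apply Lemma~\ref{lem:gradient} conditional on $(\vec{A}_i,Z_{it})$ to obtain cyclic monotonicity of $\vec{p}(\vec{v},\vec{a},z)$, use Assumption~\ref{ass:errorcontrol2}(b) on the common-$Z$ event to rewrite $\vec{p}$ as the fully conditioned expectation of $\vec{Y}_{it_m}$, and then integrate out $\vec{A}_i$ (and the common $z$ value) via iterated expectations. Your closing explanation of why the event $\{Z_{it_1}=\cdots=Z_{it_M}\}$ is indispensable is a helpful clarification that the paper leaves implicit.
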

As Lemma \ref{lem:idineqcontrol} shows, identification is based on the individuals for whom the control variable $Z_{it}$ does not vary across the time periods considered. 

For the point identification conditions under Assumption \ref{ass:errorcontrol2}, first redefine ${\cal G}$ and $G$. For any integer $M\geq 2$, any cycle $t_1,\dots,t_M,t_{M+1}= t_1$ in $\{1,\dots,T\}$, and any $\vec{x}_1,\dots,\vec{x}_{M} \in \vec{\cal X}$, let
\begin{equation}
g_{t_1,\dots,g_M}  (\vec{x}_1,\dots,\vec{x}_M)=\sum_{m=1}^M\{(\vec{x}_{m} - \vec{x}_{m+1})E[\vec{Y}_{it_m}|\vec{X}_{it_1}= \vec{x}_1,\dots,\vec{X}_{it_M}  = \vec{x}_M, Z_{it_1} =\dots=Z_{it_M}]\}.
\end{equation}
Let ${\cal G}_{t_1,\dots,t_M}$ be the support of $g_{t_1,\dots,t_M}(\vec{X}_{it_1},\dots,\vec{X}_{it_M})$, and let ${\cal G}_{M} = \cup_{t_1,\dots,t_M\in\{1,\dots,T\}}{\cal G}_{t_1,\dots,t_M}$. Let
\begin{equation}
{\cal G} = \cup_{M=2,\dots,T}{\cal G}_{M}.
\end{equation}
For any $k=1,\dots,K$ and any element $\vec{x}^{-k}$ in $\vec{X}_{it}^{-k}$, let $G_{s,t}^k(\vec{x}^{-k})$ be the conditional support of $X_{it}^k - X_{is}^k$ given that $\vec{X}_{it}^{-k} = \vec{X}_{is}^{-k} = \vec{x}^{-k}$ and that $Z_{it} = Z_{is}$. Let
\begin{align}
G = \cup_{s,t=1,\dots,T}\cup_{k=1,\dots,K}\cup_{\vec{x}^{-k}\in\vec{X}_{-k}}G_{s,t}^k(\vec{x}^{-k}). 
\end{align}
Also, suppose that we replace Assumption \ref{ass:error2m} with the following assumption.
\begin{assumption}\label{ass:error2control2}
For every $k=1,\dots,K$, with positive probability, the conditional support of $\vec{\epsilon}_{it}$ given $\vec{A}_i$ and $Z_{i1},\dots, Z_{iT}$ is $R^K$. 
\end{assumption}
Then Theorems \ref{thm:setid}, \ref{thm:sufbdm}, \ref{thm:suflgm}, and \ref{thm:nec} go through with the redefined ${\cal G}$ and with Assumptions \ref{ass:error} and \ref{ass:error2m} replaced by Assumptions \ref{ass:errorcontrol2} and \ref{ass:error2control2}, respectively. This is because the proofs of these theorems go through without further change once the redefined ${\cal G}$ and $G$ are used and the assumptions are replaced.

\begin{proof}[Proof of Lemma \textup{\ref{lem:idineqcontrol}}] Under Assumption \ref{ass:errorcontrol2}(a), Lemma \ref{lem:gradient} implies that $\vec{p}(\vec{v},\vec{a}, z) $ is cyclic monotone in $\vec{v}$ for any fixed $\vec{a}$ and fixed $z$, where
\begin{equation}
\vec{p}(\vec{v},\vec{a},z) = E[\vec{Y}_{it}|\vec{X}_{it}'\beta = \vec{v},\vec{A}_i=\vec{a}, Z_{it}=z].
\end{equation}
This implies that for any integer $M$,  any cycle $\vec{x}_1, \vec{x}_2,\dots,\vec{x}_M, \vec{x}_{M+1}= \vec{x}_1$ in $\vec{\cal X}$, any $\vec{a}$ and any $z$, we have
\begin{equation}
\sum_{m=1}^M \vec{p}(\vec{x}_{m}'\beta,\vec{a},z)'(\vec{x}_{m}'\beta - \vec{x}_{m-1}'\beta)\geq 0.\label{preineq} 
\end{equation}
Using Assumption \ref{ass:errorcontrol2}(b), we get, for any cycle $t_1,t_2,\dots,t_M, t_{M+1}=t_1$ in $\{1,\dots,T\}$, and all $(\vec{x}_1,\dots,\vec{x}_M, \vec{a},z)$, 
\begin{align}
\vec{p}(\vec{x}_m'\beta,\vec{a},z) &= E[\vec{Y}_{it_m}|\vec{X}_{it} = \vec{x}_m,\vec{A}_i=\vec{a}, Z_{it_m}=z]\nonumber\\
& = E[\vec{Y}_{it_m}|\vec{X}_{it_1} = \vec{x}_1,\dots,\vec{X}_{it_M}= \vec{x}_M,\vec{A}_i=\vec{a}, Z_{it_1}=\dots=Z_{it_M}=z]
\end{align}
Thus, (\ref{preineq}) implies that for any cycle $t_1,t_2,\dots,t_M, t_{M+1}=t_1$ in $\{1,\dots,T\}$, and all $(\vec{x}_1,\dots,\allowbreak\vec{x}_M, \vec{a},z)$,
\begin{align}
\sum_{m=1}^M E[\vec{Y}_{it_m}|\vec{X}_{it_1},\dots,\vec{X}_{it_M},\vec{A}_i, Z_{it_1},Z_{it_1}=\dots=Z_{it_M}] (\vec{X}_{it_{m}}'\beta - \vec{X}_{t_{m+1}}'\beta)\geq 0, \text{ almost surely}.
\end{align}
Integrating out $\vec{A}_i$ and $Z_{it_1}$, we get
\begin{align}
\sum_{m=1}^M E[\vec{Y}_{it_m}|\vec{X}_{it_1},\dots,\vec{X}_{it_M},  Z_{it_1}=\dots=Z_{it_M}] (\vec{X}_{it_{m}}'\beta - \vec{X}_{t_{m+1}}'\beta)\geq 0, \text{ almost surely}.\label{control2id}
\end{align}
This proves the lemma.
\end{proof}
\medskip


\begin{thebibliography}{99}

\bibitem{Abrevaya} J. Abrevaya. Leapfrog estimation of a fixed-effects model with unknown transformation of the dependent variable. {\em Journal of Econometrics}, 93: 203-228, 1999.

%\bibitem{Ai1997} C. Ai. A Semi-parametric Maximum Likelihood Estimator. {\em Econometrica}, 65: 933-963, 1997.
%\bibitem{Amemiya1985} T. Amemiya. {\em Advanced Econometrics}. Harvard University Press, 1985.

  \bibitem{AndrewsShi2013} D. Andrews and X. Shi. Inference based on conditional moment inequalities. {\em Econometrica}, 81: 609-666, 2013.

\bibitem{AhnIchimuraPowellRuud2015} H. Ahn, H. Ichimura, J. Powell, and P. Ruud. Simple Estimators for Invertible Index Models. Working paper, 2015.

\bibitem{BerryHaile2014} S. Berry and P. Haile.  Identification in differentiated products markets using market-level data.  {\em Econometrica}, 82: 1749-1797, 2014.

  \bibitem{BLP1995} S. Berry, J. Levinsohn, and A. Pakes.  Automobile prices in market equilibrium.  {\em Econometrica}, 65: 841-890, 1995.

\bibitem{Chamberlain1980} G. Chamberlain.   Analysis of Variance with
  Qualitative Data.  {\em Review of Economic Studies}, 47: 225-238, 1980.

\bibitem{Chamberlain2010} G. Chamberlain.   Binary Response Models for Panel Data: Identification and Information.  {\em Econometrica}, 78: 159-168, 2010.

\bibitem{Cheng1984} P. E. Cheng. Strong Consistency of Nearest Neighbor Regression Function Estimators. {\em Journal of Multivariate Analysis}, 15:63-72, 1984.

\bibitem{ChernozhukovLeeRosen2008} V. Chernozhukov, S. Lee, and A. Rosen.   Intersection Bounds: Estimation and Inference.  {\em Econometrica}, 81: 667-737, 2013.

\bibitem{Fox2007} J. Fox. Semi-parametric Estimation of Multinomial  Discrete-Choice Models using a Subset of Choices. {\em RAND Journal of Economics}, 38: 1002-1029, 2007.

  \bibitem{Freyberger2013}J. Freyberger. Asymptotic Theory for Differentiated Product Demand Models with Many Markets. Working paper, 2013.

\bibitem{FreybergerHorowitz2013} J. Freyberger and J. Horowitz.
  Identification and Shape Restrictions in Nonparametric Instrumental
  Variables Estimation. Working paper, 2013.
  
\bibitem{Han1987} A. Han. Nonparametric Analysis of a Generalized
  Regression Model. {\em Journal of Econometrics}, 35:303-316, 1987.

\bibitem{Han1988} ---. Large Sample Properties of the Maximum Rank Correlation Estimator
in Generalized Regression Models. Working paper, 1988.

\bibitem{HIR2003} K. Hirano, G. W. Imbens, and G. Ridder. Efficient Estimation of Average Treatment Effects Using the Estimated Propensity Score. {\em Econometrica}, 71:1161-1189, 2003.

\bibitem{Horowitz2009} J. Horowitz. {\em Semi-parametric and Nonparametric Methods in Econometrics}. Springer-Verlag, 2009 (second edition).

%\bibitem{Ichimura1993} H. Ichimura. Semiprametric Least Squares (SLS) and
%  Weighted SLS Estimation of Single-Index Models. {\em Journal of Econometrics}, 58:71-120, 1993.

\bibitem{HonoreKyriazidou2000} B. Honor\'{e} and E. Kyriazidou. Panel Discrete Choice Models with Lagged Dependent Variables. {\em Econometrica}, 68:839-874, 2000.

\bibitem{HonoreLewbel2002} B. Honor\'{e} and A. Lewbel. Semi-parametric binary choice panel data models without strictly exogenous regressors. {\em Econometrica}, 70:2053-2063, 2002.

%\bibitem{IchimuraLee1991} H. Ichimura and L. F. Lee. Semiprametric Estimation of Multiple Index Models: Single Equation Estimation. In {\em    Nonparametric and semiparametirc methods in econometrics and    statistics}, ed. W. Barnett, J. Powell, and G. Tauchen.   Cambridge  University Press, 1991.
 
 \bibitem{Imbens2004} G. W. Imbens. Nonparametric Estimation of Average Treatment Effect Under Exogeneity: A Review. {\em Review of Economics and Statistics} 86(1):1-29, 2004.

 
% \bibitem{ImbensNewey2009} G. W. Imbens and W. K. Newey. Identification  and Estimation of Triangular Simultaneous Equations Models Without Additivity. {\em Econometrica}, 77(5):1481-1512, 2009.

%\bibitem{KleinSpady1993} R. Klein and R. Spady. An Efficient Semi-parametric Estimator of Binary Response Models. {\em Econometrica}, 61:387-421, 1993.

%\bibitem{Lee1995} L. F. Lee. Semi-parametric maximum likelihood estimation of polychotomus and sequential choice models. {\em Journal of Econometrics}, 65:381-428, 1995.

% \bibitem{LeeSongWhang2013} S. Lee, K. Song, and Y. Whang. Testing functional inequalities. {\em Journal of Econometrics}, 172: 14-32, 2013.

\bibitem{Lewbel1998} A. Lewbel. Semi-parametric latent variable estimation with endogenous or mismeasured regressors. {\em Econometrica}, 66: 105-121, 1998.

\bibitem{Lewbel2000} A. Lewbel. Semi-parametric qualitative response model
  estimation with unknown heteroscedasticity or instrumental variables.
  {\em Journal of Econometrics}, 97: 145-177, 2000.

\bibitem{LiRacine2006} Q. Li and J. Racine. {\em Nonparametric Econometrics: Theory and Practice}.  Princeton University Press, 2007.


\bibitem{Manski1975} C. F. Manski. The Maximum Score Estimation of the
  Stochastic Utility Model. {\em Journal of Econometrics}, 3:205--228, 1975.

\bibitem{Manski1987} C. F. Manski. Semi-parametric Analysis of Random
  Effects Linear Models from Binary Panel Data. {\em Econometrica}, 55:357-362, 1987.

\bibitem{Manski1988} C. F. Manski. Identification of Binary Response Models. {\em JASA}, 83:729-738, 1988.

\bibitem{McFadden1978} D. L. McFadden. Modelling the Choice of Residential Location.  In A. Karlgvist et. al., editors, \emph{Spatial Interaction Theory and Residential Location}, North-Holland, 1978.

\bibitem{McFadden1981} D. L. McFadden. Economic Models of Probabilistic Choice.
In C. Manski and D. McFadden, editors, \emph{Structural Analysis of Discrete
Data with Econometric Applications}, MIT Press, 1981.

\bibitem{NeweyMcFadden1994} W. K. Newey and D. L. McFadden. Chapter 36 Large Sample Estimation and Hypothesis Testing. In R. F. Engle and D. L. McFadden, editors, {\em Handbook of Econometrics, Volume 4}, Elsevier, 1994.

\bibitem{PakesPorter2013} A. Pakes and J. Porter. Moment Inequalities
  for Semi-parametric Multinomial Choice with Fixed Effects. Working paper,
  Harvard University, 2013.

\bibitem{PowellRuud2008} J. Powell and P. Ruud. Simple Estimators for Semi-parametric Multinomial Choice Models. Working paper, 2008.

\bibitem{rockafellar1996convex} R. Tyrell Rockafellar. \emph{Convex Analysis}. Princeton University Press, 1970.

\bibitem{RosenbaumRubin1983} P. Rosenbaum and D. Rubin. The Central Role of the Propensity Score in Observational Studies for Causal Effects. {\em Biometrika}. 70:41-55, 1983.

\bibitem{StockWatson2010} J. H. Stock and M. W.  Watson. {\em Introduction to Econometrics}. 3rd ed. Pearson Publishing, 2010.

%\bibitem{Rust1994} J. Rust.  Structural Estimation of Markov Decision Processes.
%\emph{Handbook of Econometrics}, Volume 4 (ed. R. Engle and D. McFadden).
%North-Holland, 1994.

\bibitem{Villani2003} C. Villani. {\em Topics in Optimal Transportation}.  American Mathematical Society, Graduate Studies in Mathematics, Vol. 58, 2003.

%\bibitem{Yan2013} J. Yan. Semi-parametric Estimation of Multinomial Discrete Choice Models with Endogenous Regressors.  Working paper, Chinese University of Hong Kong, 2013.

\end{thebibliography}
\end{document}